\begin{document}

\newtheorem{theorem}{Theorem}
\newtheorem{lemma}{Lemma}
\newtheorem{asmp}{Assumption}

\title[Valuation using Volatility and Earnings]{Valuation Measure of the Stock Market using Stochastic Volatility and Stock Earnings}

\author{Andrey Sarantsev}

\address{Department of Mathematics and Statistics, University of Nevada, Reno}

\email{asarantsev@unr.edu} 

\author{Angel Piotrowski}

\author{Ian Anderson}

\begin{abstract}
We create a time series model for annual returns of three asset classes: the USA Standard \& Poor (S\&P) stock index, the international stock index, and the USA Bank of America investment-grade corporate bond index. Using this, we made an online financial app simulating wealth process. This includes options for regular withdrawals and contributions. Four factors are: S\&P volatility and earnings, corporate BAA rate, and long-short Treasury bond spread. Our valuation measure is an improvement of Shiller's cyclically adjusted price-earnings ratio. We use classic linear regression models, and make residuals white noise by dividing by annual volatility. We use multivariate kernel density estimation for residuals. We state and prove long-term stability results. 
\end{abstract}

\subjclass[2020]{62P05, 91G15, 91G70. Keywords: autoregression, stationarity, white noise, stochastic volatility, price-earnings ratio, bond duration, multiple linear regression. Corresponding author: A.S. \texttt{asarantsev@unr.edu} ORCID: 0000-0002-7415-4892}

\maketitle

\thispagestyle{empty}

\section{Introduction}

\subsection{Index levels and earnings} A classic way to find whether the stock market is priced fairly is to compare index levels with earnings. The classic {\it cyclically adjusted price-earnings} (CAPE) ratio, introduced by Robert Shiller, measures how reasonably the stock market (measured by the Standard \& Poor 500 and its predecessors) is priced relative to fundamentals (earnings). It is computed by dividing the current index level by annual earnings, averaged over the previous 10 years. Thie latter, called {\it cyclical averaging}, is done to smooth out wild short-term fluctuations in earnings, and take the data from both recessions and expansions. Historically, this ratio is between 5 and 45. Its long-term median value is 17. If it is much larger than that, then the market is overpriced and is likely to crash soon, or at least not grow as fast as before. See the classic article \cite{ShillerPaper} or the monograph \cite{ShillerBook}, or a more recent article \cite{Ural} and a textbook \cite[Chapter 8, Sections 4, 6]{Ilmanen}.

Robert Shiller did a remarkable job compiling historical data from 1871 for Standard \& Poor 500 index and its predecessors. Three main examples of such overprived markets in the USA history are: late 1920s followed by the Great Depression; the boom of 1960s followed by the stagflation and energy crises of 1970s; and the dotcom bubble in the late 1990s, followed by another stock crash and a lost decade of the 2000s. 

However, a surprising observation was made based on recent data: In the late 2010s and now (2025 as of this writing), the CAPE ratio is very high compared to historical average: Now, it is around 40. In fact, it is close to its all-time peak in the late 1990s at the height of the dotcom bubble. However, the market crash failed to materialize so far. There was a short-timed crash in early 2020 during the lockdowns, but the market has recovered quickly. Did the CAPE lose its predictive value? 

\subsection{The new valuation measure} In our previous article \cite{Valuation}, we modified the CAPE. We think that CAPE lost its importance because dividend payout decreased. Historically, about 70\% of earnings of stocks were paid as dividends to shareholders. However, in the last few decades, this decreased to 40\% on average. Instead of paying earnings as dividends, executives use earnings for stock buybacks. This increases the stock value. 

Reasons are still disputed in the academic community, but likely include tax issues, \cite[Chapter 8, Section 4]{Ilmanen}, \cite[Chapter 12, Section 5]{Ang}, as well as incentives for executives who get paid in bonuses of stocks in their own company, \cite[Chapter 15]{Ang}. During stock buybacks, overall earnings remain the same, but index levels increases. This boosts the CAPE ration, but does not make the market overpriced. 

To capture this, in \cite{Valuation} we compared annual total returns with annual earnings growth. In our research, annual returns are {\it geometric} (not arithmetic, which are commonly quoted); {\it total} (including dividends); and {\it nominal} (not adjusted for inflation).  Annual earnings growth is measured as logarithmic change, and also nominal (without inflation adjustments). The long-term difference is 4--5\% per year. Cumulative total returns minus earnings have linear trend with slope 4--5\%. After detrending, this becomes mean-reverting: Modeled as a classic autoregression of order 1. This detrended quantity is our new valuation measure. Our research showed significant predictive power for next year's total returns. 

\begin{figure}[t]
\subfloat[CAPE vs New Measure]{\includegraphics[width = 8cm]{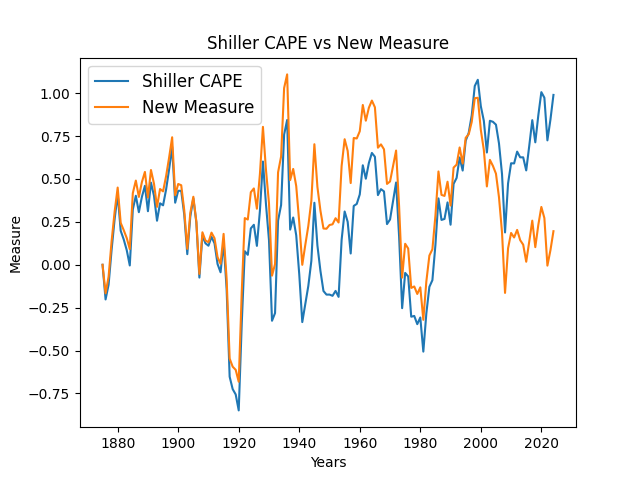}}
\subfloat[Volatility]{\includegraphics[width = 8cm]{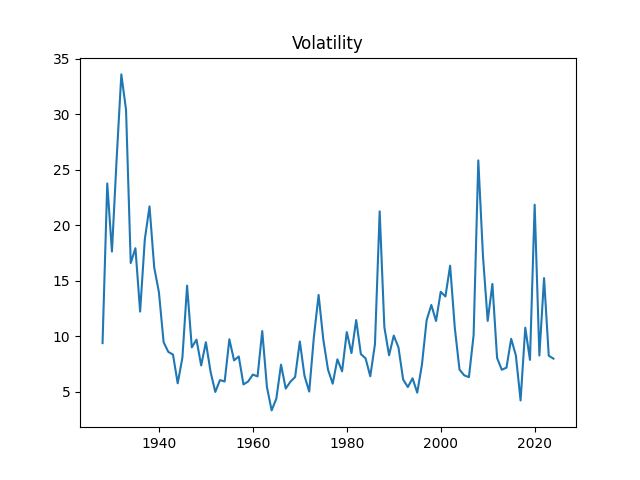}}
\caption{On the left: Shiller cyclically-adjusted price-earnings (CAPE) ratio on the log scale versus the new valuation measure proposed in \cite{Valuation}. They track each other before 2000, but after that they diverge. On the right: annual realized volatility for the Standard \& Poor. }
\label{fig:compare}
\end{figure}

This new valuation measure more or less tracks the CAPE in the 19th and 20th centuries. The new valuation measure reproduces these 3 peaks in the 1920s, 1960s, and 1990s mentioned above. However, after the dotcom bubble crashed in the 21st century, these two measures diverge. The CAPE shoots upward during the 2010s, but the new measure stays close to its historical average. This is a reason to believe that the stock market now, in fact, is reasonably priced, and there is no bubble.  Comparison of the Shiller CAPE and the new valuation measure is in \textsc{Figure}~\ref{fig:compare} (A).

\subsection{Market volatility} Assume we model earnings growth separately. Then we can model earnings and the index together, using this valuation measure. Indeed, this measure is the difference between total returns and earnings growth. Knowledge of earnings growth gives us knowledge of total returns. But it is much harder to model earnings growth terms: These are not independent identically distributed (IID), that is, not a (strong) white noise, judging by the autocorrelation function (ACF). In fact, the returns $Q(t)$ themselves have ACF close to zero, but the squared returns $Q^2(t)$ or their absolute values $|Q(t)|$ do not have zero ACF. For background, see the classic references \cite[Chapter 7]{Series}, especially Figures 7.1--7.3. A solution comes from using the volatility of the S\&P index. For background on such {\it stochastic volatility} models, see \cite[Chapter 7, Section 4]{Series} or \cite[Chapter 3, Section 1]{Textbook} (especially Figures 3.1, 3.2). 

We used monthly average volatility index (VIX), computed daily by the Chicago Board of Options Exchange (CBOE) from {\it implied volatility} (computed using prices of European options traded on this exchange). See background in \cite[Chapter 7, Section 2, Figure 7.3]{Ang}; also \cite[Chapter 10, Section 1]{Ang} for volatility-driven strategies; or \cite[Chapter 8, Section 4, Figure 8.5]{Ilmanen}. Dividing monthly S\&P 500 returns by this VIX made these returns IID Gaussian. Unfortunately, applying this methodology directly to our setting is problematic: We have annual (not monthly) data, and the VIX data goes back only as far as 1990 (or 1986 for the version of VIX for S\&P 100, a close relative of S\&P 500). 

To resolve this difficulty, we switched to annual {\it realized volatility} (as opposed to implied volatility): the standard deviation of daily index fluctuations (log changes) for S\&P 500 and its predecessor S\&P 90, within each year 1928--2024. This range of years is more limited than 1871--2024 from Robert Shiller's data library. Still, it includes the three great bubbles. See the graph on \textsc{Figure}~\ref{fig:compare} (B). 

What is more, we improved data for S\&P 500 itself: We take end-of-year (December 31 or the last working day of each year) index level (available from 1927) instead of the average January daily close data (available from Robert Shiller's data library from 1871). This is important, since monthly average is not directly investable, but the end-of-year level is. 

Annual returns are not IID, but dividing them by annual volatility makes them IID. Same for the annual earnings growth. The volatility itself can be modeled as an autoregression of order 1, if you take the logarithm of volatility. Thus we described two factors influencing the stock market returns: Volatility and the new valuation measure (computed using earnings). 

\subsection{Bond rates and spreads} Other factors influence the S\&P 500 total returns. Bond and stock markets compete for the capital of investors. An increase in bond rates makes the bond market more attractive, thus reducing demand for stocks. As a benchmark, we took the BAA rate for US corporate bonds. This Moody's rating is less than the perfect AAA, and is on the lower range for investment-grade bonds. Still, these bonds are relatively safe from defaults. We take average daily rate in December of 1927--2024. See the classic article \cite{Bernanke} for reference. For background, see textbooks \cite[Chapter 9, Section 4]{Ang} and \cite[Chapter 9, Section 4]{Ilmanen}. 

\begin{figure}[t]
\subfloat[Spread]{\includegraphics[width = 8cm]{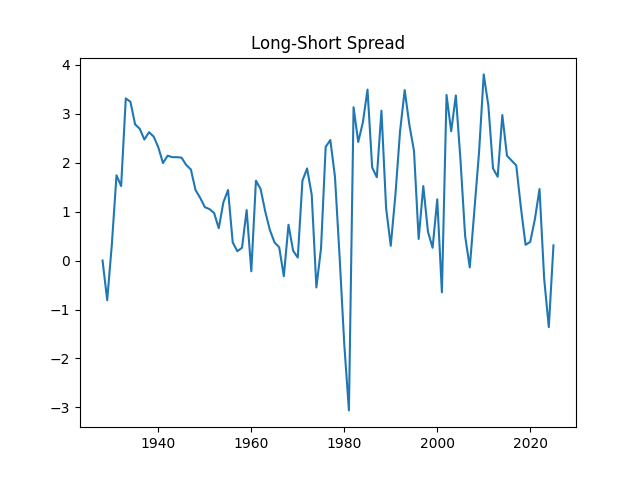}}
\subfloat[BAA rate]{\includegraphics[width = 8cm]{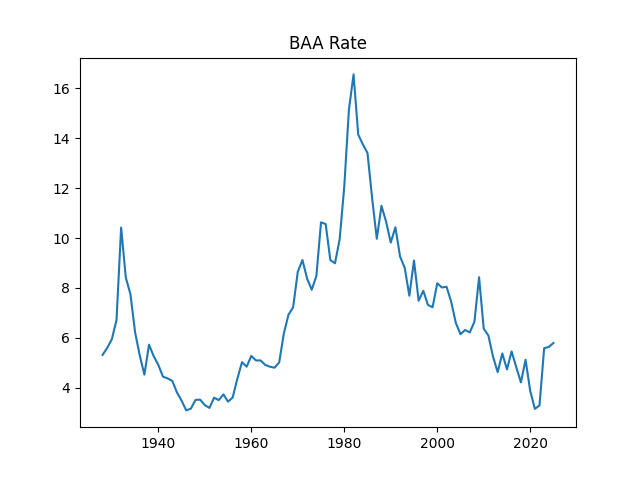}}
\caption{The long-short (10 year minus 3 month) Treasury spread, and the Moody's BAA rate: Average daily December data.}
\label{fig:bonds}
\end{figure}

Yet another factor is the long-short Treasury spread, sometimes called {\it term spread}. Usually, long-term Treasury bonds have higher rate than short-term Treasury bills. This is natural, since committing money for a longer time is more risky and demands compensation in the form of a higher rate. Occasionally, however, this comparison inverts: Long-term Treasuries have lower rate than short-term Treasuries. This phenomenon is called the {\it inverted yield curve}. This usually happens when the market anticipates an impeding recession. 

The Federal Reserve System has direct influence over short-term rates via its standard {\it open-market operations}. When a recession strikes, the Federal Reserve lowers short-term rates. However, the long-term rate now combines the current short-term rate and the anticipated future short-term rates. Indeed, an alternative to buying long-term bonds is reinvesting many times in short-term bonds. In sum, when long-term rates are lower than the short-term rates, then a recession and a stock market crash are likely to happen soon. For the term spread, we consider December average daily data for 10 years minus 3 month Treasury rates. For reference, see for example \cite{FF1989}, and more recent textbooks \cite[Chapter 9, Section 3]{Ang}, especially Figure 9.9; or \cite[Chapter 9]{Ilmanen}. 

We model both BAA rate and the term spread as autoregression of order 1. For the BAA rate, we take its logarithm to prevent it from being negative, similarly to the volatility.  See the graphs in~\textsc{Figure}~\ref{fig:bonds}. 

\subsection{Our main contributions} In our previous research, we proved long-term stability only for the valuation measure. Here, we have a feedback loop: The stock market returns are modeled using linear regression with factors including the new valuation measure, but the latter in return is computed using earnings growth and the stock market returns. Thus we need to state and prove a non-trivial long-term stationarity result. It does not follow immediately from our assumptions. 

Also, the simulator includes the international stocks and USA bonds. This is why we model this using linear regression as well, using rates and volatility. We provide full statistical analysis and prove the stationarity for this extended model. We apply this simulator for retirement modeling, to test the classic 4\% withdrawal rule.

The Excel data and Pyhon code for the fitting of the model, analysis of residuals, and the web app itself, is available in the repository \texttt{asarantsev/sim} Web app has back-end written in Python, using the Flask framework, with HTML for the front-end. The web address is \texttt{asarantsev.pythonanywhere.com} 

\subsection{Organization of the article} In Section 2, we describe the raw data, and the preliminary processing. We discuss the sources and forats of data in detail. Section 3 is devoted to methodology of white noise analysis. In Section 4, we describe the simple model with only two factors: volatility and BAA rate. In Section 5, we describe in detail the new valuation measure, and use it for combined model of S\&P returns and earnings. In Section 6, we consider the complete model. For each model, we state and prove the main stability result. Section 7 describes the online financial simulator based on the model from Section 6. 

\section{Data Description} 

\subsection{Preliminary remarks} For this background, see \cite[Chapter 1, Section 1]{Textbook}. We consider total returns (including dividends or coupon payments, not just price changes), nominal (not inflation-adjusted) rather than real (inflation-adjusted), and geometric (log returns) rather than arithmetic: If 1\$ turned into 2\$ the returns are $\ln 2$ not 100\%. Arithmetic and geometric returns $r_a, r_g$ are related as $r_g = \ln(1 + r_a)$. However, if $r_a, r_g$ are small, then $r_a \approx r_g$. 

Using geometric returns allows us to aviod compound interest: If $r_1$ and $r_2$ are geometric returns of this asset class in years 1 and 2, then $r_1 + r_2$ are geometric returns of this asset class in two years. However, arithmetic returns are better for computing portfolio returns: If $r_1$ and $r_2$ are arithmetic returns of asset classes 1 and 2 in year $t$, and if $\pi_1$ and $\pi_2$ are portfolio weights for a constant-weighted portfolio, then $\pi_1r_1 + \pi_2r_2$ are arithmetic returns of this portfolio in year $t$.

\subsection{Online data resources} We used the following web sites:

\begin{itemize}
\item Robert Shiller's Yale University financial data library (Shiller): \texttt{shillerdata.com}
\item Federal Reserve Economic Data (FRED): \texttt{fred.stlouisfed.org}
\item Yahoo Finance: \texttt{finance.yahoo.com}
\item Novel Investor web site: \texttt{novelinvestor.com/historical-returns}
\end{itemize}

\subsection{Raw data} All data is nominal.

\begin{itemize}
\item Daily close December 30, 1927--December 31, 2024 for Standard \& Poor (S\&P): Standard \& Poor 500, 1957--2024, and its predecessor, Standard \& Poor 90, 1927--1956. {\it Source:} Yahoo Finance \^{}SPX series
\item Annual S\&P earnings $E(t)$, 1918--2024, and dividends, 1928--2024. {\it Source:} Shiller, file \texttt{ie\_data.xls} columns named Dividend D and Earnings E.
\item Moody's BAA corporate bond rate $R(t)$, average daily close December, 1927--2024. {\it Source:} FRED, BAA series
\item Treasury long rates, average daily close December, 1927--2024. {\it Source:} FRED series DGS10, 10-year Treasury rate
\item Treasury short rates, average daily close December, 1927--2024. {\it Source:} FRED series TB3MS, 3-month Treasury rate, 1934--2024, FRED series M1329AUSM193NNBR, 3- and 6-month short-term certificates of deposit, 1927--2024
\item Bank of America (BofA ICE) Corporate Bonds Total Return Index Value, $\mathcal U(t)$  end-of-year for 1972--2024. {\it Source:} FRED series BAMLCC0A0CMTRIV
\item International annual total returns: Morgan Stanley Capital International (MSCI) Europe, Australasia \& Far East (EAFE), 1970--2024, index; MSCI Emerging Markets (EM), 1988--2024, index. {\it Source:} Novel Investor
\end{itemize}

Using this raw data, we compute the following:

\begin{itemize}
\item From the daily S\&P index, we compute daily log changes of S\&P. Within each year 1928--2024, we compute the standard deviation of these daily log changes. This is the annual S\&P (realized) {\it volatility} $V(t)$, 1928--2024.
\item We subtract Treasury short rates from Treasury long rates, and get Treasury {\it term spread} $S(t)$, or {\it long-short spread}, 1927--2024.
\item Total S\&P annual returns $Q(t)$ are computed from end-of-year S\&P index and its annual dividends, 1928--2024.
\item For the averaging window $L = 1, \ldots, 10$, trailing averaged earnings $\overline{E}(t)$ of year $t$ are the average of $E(t), \ldots, E(t-L+1)$ annual earnings, for 1927--2024.
\item We compute earnings growth $G(t)$ as log changes in annual earnings, 1928--2024, and averaged earnings growth $\overline{G}(t)$ as log changes in annual averaged earnings, 1928--2024.
\item We compute international stock returns $I(t)$ as EAFE from 1970--1987 and the 60/40 portfolio of EAFE/EM from 1988--2024, thus having the data 1970--2024. The 60/40 choice is arbitrary and can be changed in further research.
\item Compute corporate bond returns $B(t)$ as log changes in their Index Value, 1928--2024. 
\end{itemize}

We also provide the data on end-of-year annual S\&P index, earnings, dividends, bond rates, and $I(t), \mathcal U(t)$ from above, on our web site \texttt{asarantsev.github.io} and in GitHub repository \texttt{github.com/asarantsev/sim}

\subsection{Asset classes and return factors} Let us describe in detail the three asset classes. We denote their total returns by letters. 

\begin{itemize}
\item USA stocks $Q$, measured by the S\&P index, 1928--2024.
\item International stocks $I$, measured by the MSCI indices: The EAFE index, 1970--1987, and the 60/40 portfolio of the EAFE/EM indices, 1988--2024.
\item USA investment-grade corporate bonds $B$, measured by BofA ICE Corporate Bond Index, 1973--2024.
\end{itemize}

The four factors which we use for modeling in this article:

\begin{itemize}
\item Annual S\&P volatility $V$
\item Moody's BAA corporate bond rate $R$
\item Long-short term spread $S$
\item Annual S\&P earnings $E$
\end{itemize}

\section{White Noise Analysis Methodology}

The code is \texttt{Monte-Carlo.py} and results are in~\textsc{Table}~\ref{table:values}. We shall use Shapiro-Wilk and Jarque-Bera normality tests, as well as skewness, kurtosis, and the ACF. Using Monte Carlo simulation, we provide critical values for IID $x_1, x_2, \ldots, x_N \sim \mathcal N(\mu, \sigma^2)$ for the following statistics: empirical skewness, empirical kurtosis (with both  theoretical skewness and kurtosis equal zero for normal distribution; and the $L_1$ norm of the empirical ACF $\hat{\rho}(k)$ with lag $k$. Here, $\overline{x}$ and $s$ are empirical mean and standard deviation of the data $x_1, \ldots, x_N$.  
$$
L_1 = \sum_{k=1}^5|\hat{\rho}(k)|,\quad \hat{\rho}(k) = \frac1{Ns^2}\sum\limits_{j=1}^{N-k}(x_j - \overline{x})(x_{j+k} - \overline{x}).
$$
For background on white noise and normality tests, see \cite[Chapter 1, Section 6]{Series}. Also, for white noise testing, see \cite[Chapter 1, Sections 2, 4]{Textbook},  and for normality testing, see \cite[Chapter 1, Section 5]{Textbook}; finally, see \cite[Chapter 2, Section 6]{Textbook}, and the result (2.10) there. 

\begin{table}[h]
\centering
\begin{tabular}{|ccccc|}
\hline
Size $N$ & $p$ & Skew & Kurt & L1 \\
\hline
100 & 95\% & 0.47 & 0.76 & 0.63 \\
100 & 99\% & 0.63 & 1.38 & 0.76 \\
50 & 95\% & 0.64 & 0.98 & 0.88 \\
50 & 99\% & 0.88 & 1.86 & 1.08 \\
\hline
\end{tabular}
\caption{Critical values for skewness, kurtosis, and $L_1$ norm of the first 5 lags of the ACF for original and absolute values of standard normal variables.}
\label{table:values}
\end{table}

Note that we perform the ACF computation both for $x_1, \ldots, x_N$ and for $|x_1|, \ldots, |x_N|$. The necessity of doing this not just for $x_1, \ldots, x_N$ is shown in  \cite[Chapter 3, Section 1]{Textbook} (especially Figures 3.1, 3.2), mentioned in the Introduction.  But both $x_1, \ldots, x_N$ (the original values) and $|x_1|, \ldots, |x_N|$ (absolute values) give us the same critical values of the $L_1$ statistic.

We recall the standard time series results: Under the white noise hypothesis, we have weak convergence:
$(\hat{\rho}(1), \ldots, \hat{\rho}(k)) \Rightarrow \mathcal Q_k$, where $N \to \infty$, where $\mathcal Q_k$ is the standard $k$-variate normal distribution. From here, it is easy to derive asymptotic results for $L_1$ and $L_2$. But we decided to make Monte Carlo simulations and compute critical values by hand, since we are not sure about the rate of convergence. After all, this was in part what motivated the replacement of the standard $L_2$ Box-Pierce test with the Ljung-Box test, see the classic article \cite{Ljung}.

\section{The Simplified Model}

In this section, we consider the model of returns of three asset classes, but using only two of the four above factors: S\&P volatility $V$ for stocks and rate $R$ for bonds. This was similar to the model implemented in previous versions of the simulator. We discuss this model in detail to illustrate our main ideas. The Python code is \texttt{simple.py} 

\subsection{Stock returns} It is reasonable to model domestic stock returns $Q$ by terms which are IID but not Gaussian. But after normalizing (dividing by volatility), $Q/V$ can be modeled as IID Gaussian, see \textsc{Figure}~\ref{fig:basic-returns}. See also \textsc{Table}~\ref{table:simple-normalizing}. Suprpisingly, we see it is reasonable to model international returns $I$ as IID Gaussian even without normalizing. But we think it is a statistical illusion because we do not have that much annual data for international stocks. We still decide to normalize international returns and consider $I/V$. 

It is remarkable that we can normalize international stock returns by dividing by domestic volatility. This works for international as well as for domestic stocks. We explain it in the following way: World equity markets are interconnected, and when volatility of domestic markets increases, the same happens with international markets.

\begin{table}[h]
\centering
\begin{tabular}{lrrrrrrrr}
\hline
returns & stdev & skew & kurt & SW & JB & L1O & L1A \\
\hline
$Q$  & 0.1853 & -0.971 &  1.082 & 0.000 & 0.000 & 0.327 & 0.265 \\
$Q/V$ & 0.0197 &  0.228 & -0.211 & 0.115 & 0.601 & 0.558 & 0.497 \\
$I$  & 0.2120 & -0.617 &  1.015 & 0.168 & 0.054 & 0.488 & 0.563 \\
$I/V$ & 0.0236 &  0.604 & -0.129 & 0.049 & 0.184 & 0.392 & 0.207 \\
\hline
\end{tabular}
\caption{Descriptive statistics of domestic and international stock returns before and after normalizing. Skew = skewness, Kurt = kurtosis, SW and JB =  Shapiro-Wilk and Jarque-Bera $p$-values, L1O and L1A = L1 norm for the first 5 lags of the ACF for $Z$ and $|Z|$}
\label{table:simple-normalizing}
\end{table}

\begin{figure}
\subfloat[$Q$]{\includegraphics[width = 5cm]{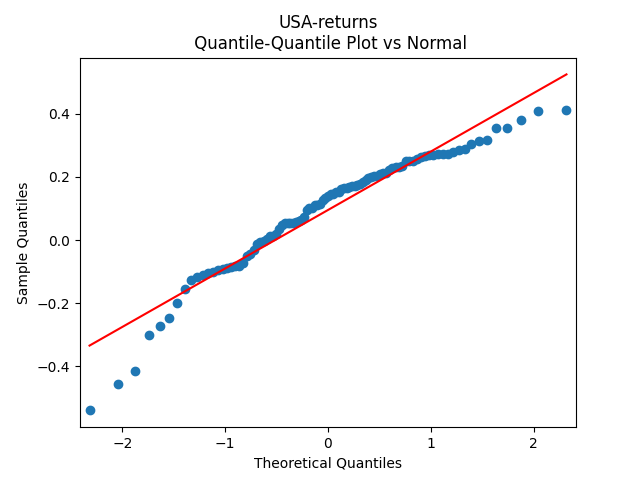}}
\subfloat[$Q$]{\includegraphics[width = 5cm]{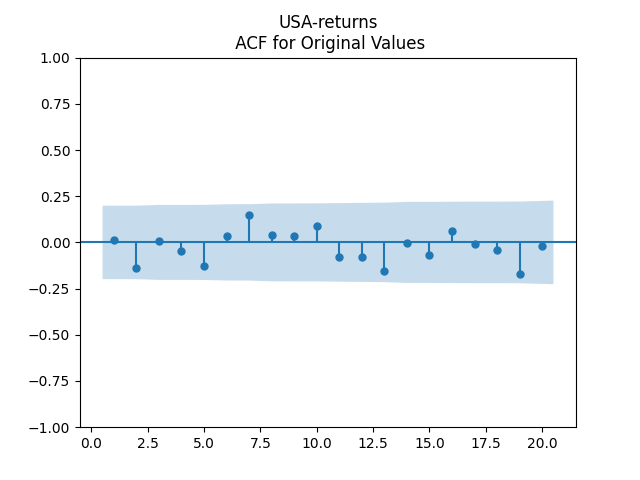}}
\subfloat[$|Q|$]{\includegraphics[width = 5cm]{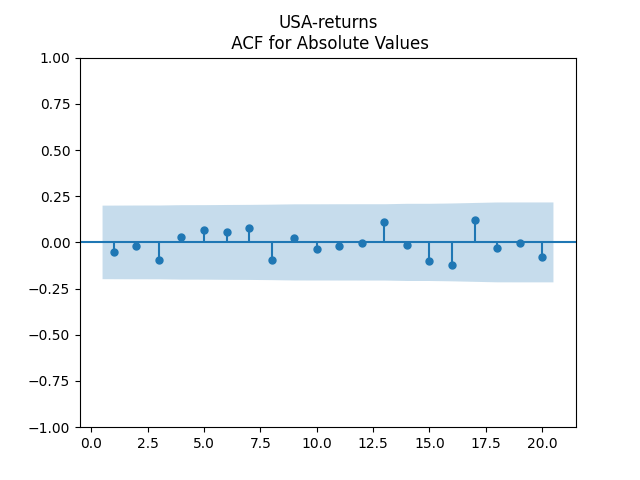}}
\newline
\subfloat[$Q/V$]{\includegraphics[width = 5cm]{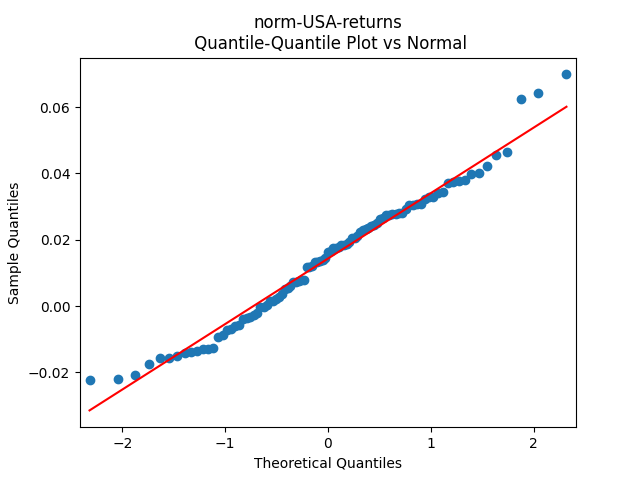}}
\subfloat[$Q/V$]{\includegraphics[width = 5cm]{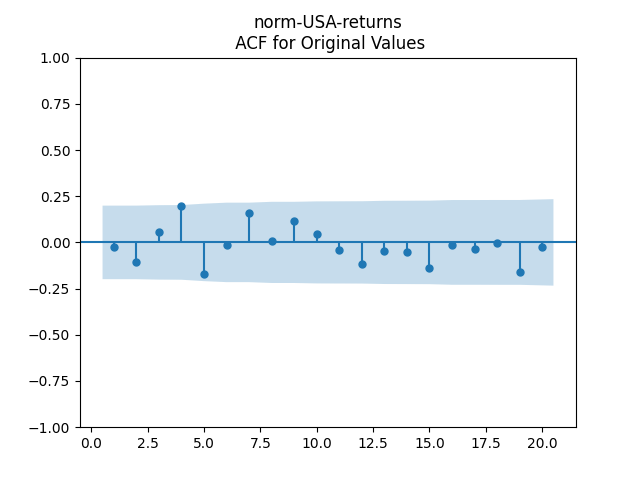}}
\subfloat[$|Q/V|$]{\includegraphics[width = 5cm]{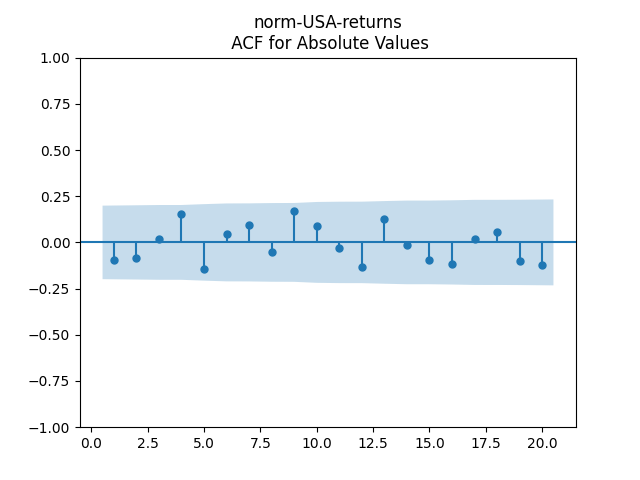}}
\caption{Top: The quantile-quantile plot for $Q$, and the ACF for $Q$ and for $|Q|$, where $Q$ is S\&P returns. This shows $Q$ are IID but not Gaussian. Bottom: The quantile-quantile plot for $Q/V$, and the ACF for $Q/V$ and for $|Q/V|$, where $Q$ is total S\&P returns. This shows $Q/V$ are IID Gaussian.}
\label{fig:basic-returns}
\end{figure}

\subsection{Stock market volatility} Following our previous manuscript with Jihyun Park with monthly implied volatility analysis, annual volatility on the log scale is modeled as an autoregression of order 1:
\begin{equation}
\label{eq:vol}
\ln V(t) = \alpha_V  + \beta_V\ln V(t-1) + Z_V(t).
\end{equation}
We use the log scale to make sure the volatility is always positive. Fitting~\eqref{eq:vol}, we get: $\alpha_V = 0.847850$ and $\beta_V = 0.620146$. We reject the hypothesis $\beta_V = 1$ (the random walk), with $p < 0.1\%$. Analysis of residuals shows that they are IID but not Gaussian, see \textsc{Table}~\ref{table:residuals-simple}.

\subsection{Bond rates and returns} Returns $B(t)$ of bonds are modeled using bond rates $R(t)$. The simplest approximation of bond returns during year $t$ is 
\begin{equation}
\label{eq:simplest-bond}
B(t) \approx 0.01\cdot R(t-1).
\end{equation}
This is true if we had arithmetic returns $B(t)$; but geometric returns are close to arithmetic returns if both are small. A bond with 5\% rate at end of year $t-1$ will return approximately 5\% in year $t$. However, all bonds have {\it interest rate risk:} Changes in rates during year $t$ affect the price of this bond. Sensitivity to this change is called {\it duration.} An increase in rates leads to a decrease in prices. This duration is approximately the average of dates for payments (coupon payments, usually every 6 months; and the principal at maturity), weighted by the size of these payments. For example, a {\it zero-coupon bond} has duration equal to its maturity. But corporate bonds usually do have coupons. Assuming $d_B$ is this duration, we add the following term to~\eqref{eq:simplest-bond}:
\begin{equation}
\label{eq:duration-bond}
B(t) \approx 0.01\cdot R(t-1) - d_B(R(t) - R(t-1)).
\end{equation}
Finally, corporate bonds do have a small risk of default: Not paying in full their coupons or the principal. This stands in contrast with Treasury bonds, which have interest rate risk but not default risk. We need to substract a constant from the right-hand side of~\eqref{eq:duration-bond}:
\begin{equation}
\label{eq:B}
B(t) - 0.01\cdot R(t-1) = -a_B - d_B(R(t) - R(t-1)) + Z_B(t).
\end{equation}
Results are: $a_B = 0.016611$ and $d_B = 0.055884$. The parameter $m_0$ stands for the {\it duration:} sensitivity of corporate bond prices to changes in interest rates. Here, the duration is about $5.5$ years. The parameter $a_B$ measures {\it default rates}. The $p$-values for the Student $T$-test are less than $0.1\%$.  Finally, the BAA corporate bond rate is also modeled as an autoregression of order 1, after taking the logarithm to ensure they stay positive:
\begin{equation}
\label{eq:baa}
\ln R(t) = \alpha_R + \beta_R\ln R(t-1) + Z_R(t).
\end{equation}
Fitting~\eqref{eq:baa}, we get:  $\alpha_R = 0.643694$ and $\beta_R = 0.539518$. Note that the $p$-value for $\beta_R = 1$ using Student $T$-test is $p = 9.4\%$. Thus we fail to reject the null hypothesis that $\beta_R = 1$, that is, the process $\ln R$ is a random walk.

\subsection{Combined simple model} We combine equations ~\eqref{eq:vol},~\eqref{eq:baa},~\eqref{eq:B} with the observations about normalized returns:

\begin{align}
\label{eq:simple-system}
\begin{split}
\frac{Q(t)}{V(t)} &= g_Q + Z_Q(t);\\
\frac{I(t)}{V(t)} &= g_I + Z_I(t);\\
\ln V(t) &= \alpha_V  + \beta_V\ln V(t-1) + Z_V(t);\\
\ln R(t) &= \alpha_R + \beta_R\ln R(t-1) + Z_R(t);\\
B(t) &= 0.01\cdot R(t-1) -a_B - d_B(R(t) - R(t-1)) + Z_B(t);\\
\end{split}
\end{align}

\subsection{Including duration for stocks} Bond returns are influenced by the duration, discussed earlier in~\eqref{eq:B}. We can also consider stock duration: Regress $Q(t)$ and $I(t)$ versus $R(t) - R(t-1)$. However, we do not wish to write a simple linear regression:
\begin{equation}
\label{eq:simple-linear}
Q(t) = c_Q - d_Q(R(t) - R(t-1)) + Z_Q(t),
\end{equation}
since we normalized $Q$ by dividing it by volatility. Only this makes data Gaussian. So we rewrite~\eqref{eq:simple-linear}
Instead of~\eqref{eq:simple-system}, we get: 
\begin{equation}
\label{eq:new-linear}
\frac{Q(t)}{V(t)} = \frac{c_Q}{V(t)} - d_Q\frac{R(t) - R(t-1)}{V(t)} + Z_Q(t).
\end{equation}
But usually, a linear regression has an intercept. We add such intercept to~\eqref{eq:new-linear} and get:
\begin{equation}
\label{eq:vol-linear}
\frac{Q(t)}{V(t)} = a_Q - d_Q\frac{R(t) - R(t-1)}{V(t)} + \frac{c_Q}{V(t)} + Z_Q(t).
\end{equation}
We can rewrite~\eqref{eq:vol-linear} as $Q(t) = a_QV(t) - d_Q(R(t) - R(t-1)) + c_Q + V(t)Z_Q(t)$. The same can be done for returns of international instead of domestic stocks. Combining all these, we get:

\begin{align}
\label{eq:duration-system}
\begin{split}
Q(t) &= a_QV(t) - d_Q(R(t) - R(t-1)) + c_Q + V(t)Z_Q(t);\\
I(t) &= a_IV(t) - d_I(R(t) - R(t-1)) + c_I + V(t)Z_I(t);\\
\ln V(t) &= \alpha_V  + \beta_V\ln V(t-1) + Z_V(t);\\
\ln R(t) &= \alpha_R + \beta_R\ln R(t-1) + Z_R(t);\\
B(t) &= 0.01\cdot R(t-1) -a_B - d_B(R(t) - R(t-1)) + Z_B(t);\\
\end{split}
\end{align}

We have coefficients
\begin{align*}
a_Q &= 0.2121,\quad d_Q = 0.0621,\quad c_Q = -0.0109,\\
a_I &= 0.2829,\quad d_I = 0.0508,\quad c_I = -0.0189.
\end{align*}
The coefficients $d_Q, c_Q, d_I, c_I$ have Student $T$-test $p$-values $<0.1\%, 0.8\%, 2\%, 2\%$. Analysis of residuals show that all five series of residuals are IID, but only $Z_Q, Z_I, Z_B$ are Gaussian. 

\begin{table}
\centering
\begin{tabular}{|ccccccccc|}
\hline
Reg & Size & Stdev & Skew & Kurt & SW & JB & L1O & L1A \\
\hline
$Z_V$ & 96 & 0.3644 & 0.590 & 0.057 & 0.009 & 0.061 & 0.401 & 0.237 \\
$Z_R$ & 97 & 0.1357 & 0.814 & 1.409 & 0.007 & 0.000 & 0.177 & 0.360  \\
$Z_Q$ & 97 & 0.0149 & 0.226 & 0.068 & 0.420 & 0.655 & 0.480 & 0.435 \\
$Z_I$ & 55 & 0.0183 & -0.059 & -0.501 & 0.754 & 0.738 & 0.401 & 0.519 \\
$Z_B$ & 52 & 0.0263 & 0.193 & 0.238 & 0.857 & 0.800 & 0.878 & 0.706 \\
\hline
\end{tabular}
\caption{Analysis of regression residuals. Skew = skewness, Kurt = kurtosis, SW and JB =  Shapiro-Wilk and Jarque-Bera $p$-values, L1O and L1A = L1 norm for the first 5 lags of the ACF for $Z$ and $|Z|$, Length = the size of $Z$}
\label{table:residuals-simple}
\end{table}

The empirical correlation matrix of $(Z_V, Z_R, Z_Q, Z_I, Z_B)$ is 
$$
\mathbf{\Sigma}_5 := 
\begin{bmatrix}
1 & -0.3 & -0.12 & -0.1 & 0.07\\
0.3 & 1 & -0.14 & -0.04 & -0.15\\
-0.12 & -0.14 & 1 & 0.29 & 0.41\\
-0.1 & -0.04 & 0.29 & 1 & -0.07\\
0.07 & -0.15 & 0.41 & -0.07 & 1\\
\end{bmatrix}
$$

\begin{theorem} If $\beta_V, \beta_R \in (0, 1)$, and if $(Z_Q(t), Z_I(t), Z_B(t), Z_V(t), Z_R(t))$ are IID with mean zero, the system~\eqref{eq:duration-system} has a unique stationary version. 
\end{theorem}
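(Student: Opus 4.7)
The plan is to exploit the triangular structure of~\eqref{eq:duration-system}: the two factor equations for $\ln V$ and $\ln R$ are autonomous AR(1) recursions driven only by their own shocks, while the three return equations for $Q, I, B$ are static (memoryless) functions of $V(t), R(t), R(t-1)$ together with the contemporaneous noise. So there is no genuine feedback loop to resolve here, unlike the combined model of Section~6.

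First I would construct a stationary solution of $\ln V(t) = \alpha_V + \beta_V \ln V(t-1) + Z_V(t)$ via the $\mathrm{MA}(\infty)$ formula
\[
\ln V(t) \;:=\; \frac{\alpha_V}{1-\beta_V} + \sum_{j=0}^{\infty} \beta_V^{j}\, Z_V(t-j).
\]
Since $Z_V$ is IID with mean zero, it has finite mean and hence $E[\log^{+}|Z_V|] < \infty$; a standard Borel--Cantelli argument then gives almost-sure convergence of this series and produces a stationary process. Uniqueness within the class of stationary solutions is the one-line contraction estimate: any two stationary solutions $X,Y$ satisfy $X(t)-Y(t) = \beta_V^{n}\bigl(X(t-n)-Y(t-n)\bigr)$, and since $X(t-n)-Y(t-n)$ has a fixed law while $\beta_V^{n} \to 0$, the difference vanishes in probability, hence almost surely by stationarity. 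The same construction handles $\ln R(t)$, using $|\beta_R|<1$.

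Next, I would set $V(t) := \exp(\ln V(t))$ and $R(t) := \exp(\ln R(t))$, and then define $Q(t), I(t), B(t)$ by plugging into the bottom three equations of~\eqref{eq:duration-system}. Each of the five coordinates is now a single time-invariant measurable functional of the IID sequence of noise vectors $(Z_Q(s), Z_I(s), Z_B(s), Z_V(s), Z_R(s))_{s \le t}$. Because the joint noise is IID in $t$, the time-shift operator is measure-preserving on its path space, and the entire five-tuple is obtained by applying a single time-invariant measurable map to shifted noise; strict stationarity of $\bigl(Q(t), I(t), B(t), V(t), R(t)\bigr)_{t\in\mathbb Z}$ follows at once.

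For uniqueness of the joint stationary law: any other stationary solution has $\ln V', \ln R'$ components that must agree almost surely with mine by the AR(1) uniqueness above, so $V'=V$ and $R'=R$ almost surely, and the last three equations then determine $Q', I', B'$ pathwise. The only mildly technical point --- the ``main obstacle'' such as it is --- is the $\log^{+}$ integrability needed for the $\mathrm{MA}(\infty)$ series to converge, and this is already automatic from the finite-mean assumption on the noise. The substantive difficulty advertised in the introduction --- the feedback loop between returns and the valuation measure --- does not appear in~\eqref{eq:duration-system} and is deferred to the analogous theorem of Section~6.
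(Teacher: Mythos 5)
Your proposal is correct and is essentially the argument the paper intends: the paper gives no explicit proof of this theorem, but the machinery it develops for the main stability result --- Lemma~\ref{lemma:AR} constructing the stationary AR(1) version of $\ln V$ and $\ln R$ as an $\mathrm{MA}(\infty)$ series, plus the observation that the return equations are memoryless functionals of the factors and the contemporaneous noise --- is exactly your triangular decomposition. The only cosmetic difference is that the paper proves convergence of the $\mathrm{MA}(\infty)$ series in $L^1$ using the finite mean of the noise, whereas you invoke the weaker $\mathbb{E}\log^{+}|Z|<\infty$ condition with Borel--Cantelli; both are valid here.
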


\section{Modeling Earnings Growth and the New Valuation Measure}

\subsection{The valuation measure} In this subsection, we mimic the econometric analysis of \cite[Section 2.1]{Valuation}. Define the wealth process as $W(t) = \exp(Q(1) + \ldots + Q(t))$, with initial wealth $W(0) = 1$. We use $L$-year averaged trailing earnings $\overline{E}$ for $L = 10$, and their growth:
\begin{equation}
\label{eq:trailing}
\overline{E}(t) = \frac1L\sum\limits_{s=0}^{L-1}E(t-s),\quad \overline{G} = \ln \overline{E}(t) - \ln\overline{E}(t-1).
\end{equation}
We compare wealth with averaged trailing earnings:
\begin{equation}
\label{eq:w-e}
\ln W(t) - \ln \overline{E}(t) = Q(1) + \ldots + Q(t) - \overline{G}(1) - \ldots - \overline{G}(t) - \ln E(0).
\end{equation}
As discussed in the Introduction, long-term growth of $\ln W$ is 6--7\% and long-term growth of $\ln E$ (and therefore $\ln\overline{E}$) is 2\%. Therefore, the long-term growth $c$ of such difference from~\eqref{eq:w-e} is 4--5\%. After detrending (subtracting the linear trend $ct$, which is to be determined), we get the {\it new valuation measure}
\begin{equation}
\label{eq:bubble-def}
H(t) = \ln W(t) - \ln\overline{E}(t) - ct = \sum\limits_{s=1}^t(Q(s) - \overline{G}(s)) - ct.
\end{equation}
We model this quantity $H(t)$ from~\eqref{eq:bubble-def} using autoregression of order 1:
\begin{align}
\label{eq:bubble-ar}
H(t) - h = b(H(t-1) - h) + U(t).
\end{align}
Here, $h$ plays the role of the long-term average of this new valuation measure $H$, and $b$ is the mean-reverting slope coefficient of this autoregression. 

\begin{figure}[h]
\includegraphics[width = 12cm]{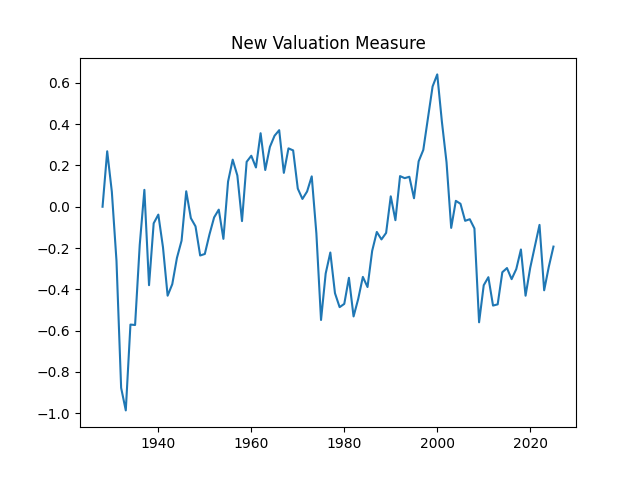}
\caption{The new valuation measure, computed in this article, using end-of-year S\&P data instead of January average S\&P data, and 10-year instead of 5-year averaging window. It closely tracks but is slightly different from the new valuation measure on \textsc{Figure}~\ref{fig:compare}. However, it shows the three main peaks in 1920s, 1960s, and 1990s. As of 2024, the stock market is not overvalued.}
\label{fig:bubble}
\end{figure}

Using~\eqref{eq:bubble-def}, we rewrite~\eqref{eq:bubble-ar} below to fit as an ordinary linear regression. 
\begin{equation}
\label{eq:bubble-reg}
Q(t) - \overline{G}(t) = \alpha +  \beta(t-1) - \gamma\sum\limits_{s=1}^{t-1}(Q(s) - \overline{G}(s)) + U(t).
\end{equation}

\begin{table}[h]
\centering
\begin{tabular}{|cccc|}
\hline
\bf{coefficient} & \bf{estimate} & \bf{Student} $p$-\bf{value} & \bf{stderr} \\
\hline
$\alpha$ & 0.023893 & 51.6\% & 0.037\\
$\beta$ & 0.008608 & 0.3\% & 0.003\\
$\gamma$ & 0.190133 & 0.2\% & 0.060\\
\hline
\end{tabular}
\caption{Regression results for~\eqref{eq:bubble-reg}. Coefficients $\beta$ and $\gamma$ are statistically significant. Innovations $U$ are IID but not Gaussian.}
\label{table:bubble-fit}
\end{table}

Results of fit of~\eqref{eq:bubble-reg} are shown in \textsc{Table}~\ref{table:bubble-fit}. Solving for $h, b, c$ in~\eqref{eq:bubble-def},~\eqref{eq:bubble-ar}, similarly to \cite[(6)]{Valuation}, we get:
\begin{equation}
\label{eq:bubble-result}
b = 1 - \gamma = 0.819867,\quad c = \frac{\beta}{\gamma} = 0.04527,\quad h = \frac{\alpha - c}{\gamma} = -0.1124.
\end{equation}
We reject the hypothesis $b = 1$, so the autoregression~\eqref{eq:bubble-ar} is mean-reverting, not a random walk. Residuals $U$ are IID but not Gaussian, judging by statistics in \textsc{Table}~\ref{table:residuals-bubble}.

\subsection{Earnings growth} For annual earnings $E(t)$, their growth $G(t) = \ln E(t) - \ln E(t-1)$ can be also modeled using annual volatility. Terms $G(t)$ are not IID. But dividing these growth terms by annual market volatility $V$ makes them white noise: $G(t)/V(t)$ are IID and Gaussian, see \textsc{Figure}~\ref{fig:growth}. To us, it is surprising that dividing $G$ by $V$ makes these IID. After all, $V$ is the volatility of daily index changes, not annual earnings. However, this method does work. The quantity $V$ plays universal role for stock markets: It improves (makes closer to IID) S\&P returns, international returns, and S\&P earnings. 

\begin{figure}
\subfloat[$G$]{\includegraphics[width = 5cm]{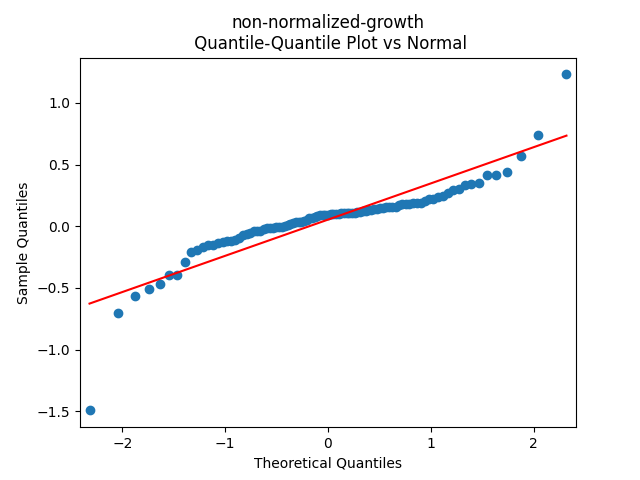}}
\subfloat[$G$]{\includegraphics[width = 5cm]{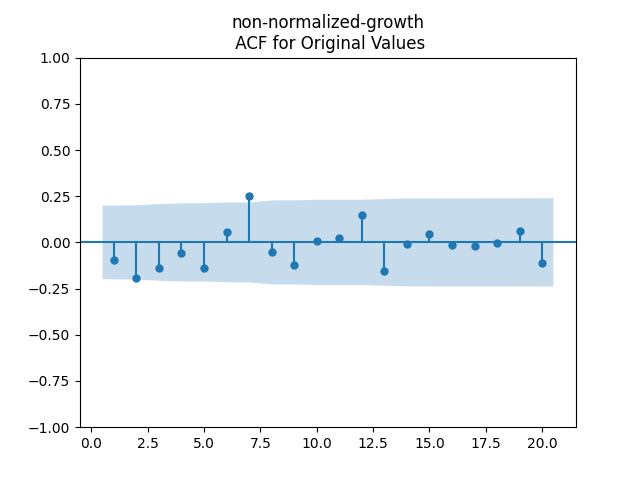}}
\subfloat[$|G|$]{\includegraphics[width = 5cm]{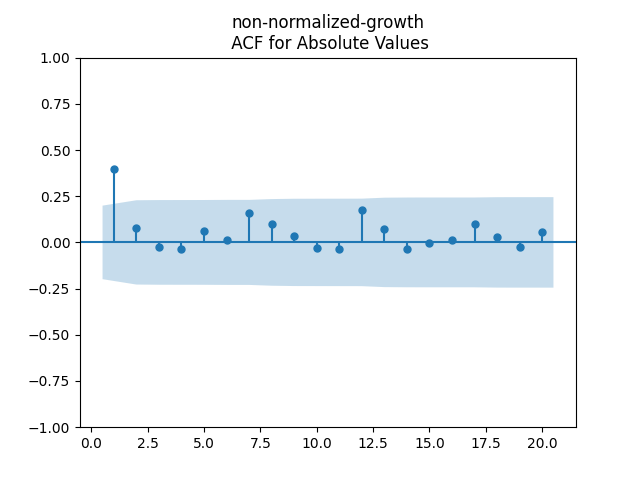}}
\newline
\subfloat[$G/V$]{\includegraphics[width = 5cm]{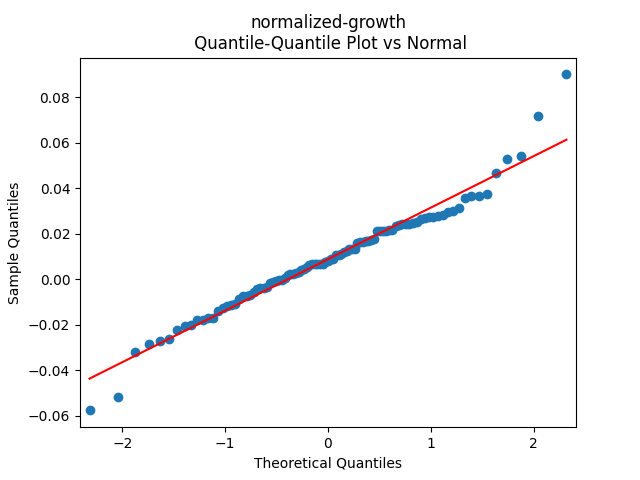}}
\subfloat[$G/V$]{\includegraphics[width = 5cm]{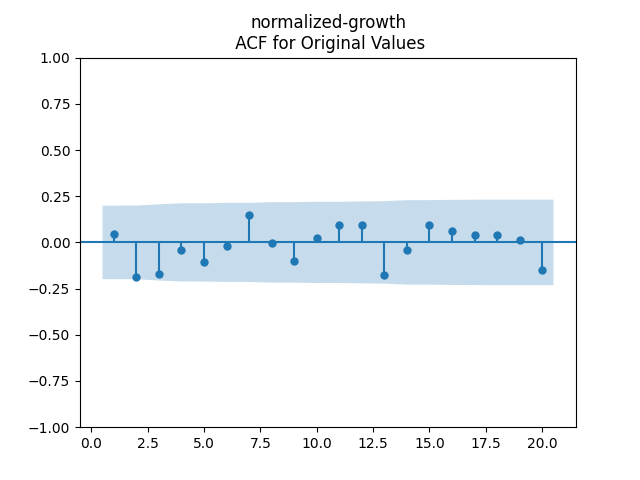}}
\subfloat[$|G/V|$]{\includegraphics[width = 5cm]{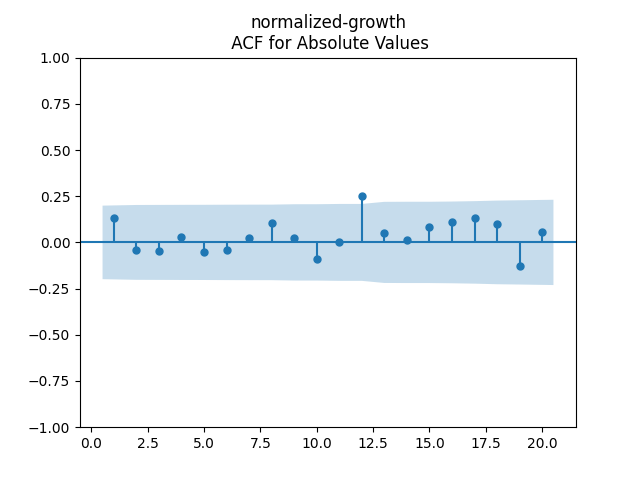}}
\caption{Top: The quantile-quantile plot for $G$, and the ACF for $G$ and for $|G|$, where $G$ is S\&P returns. This shows $G$ are not IID and not Gaussian. Bottom: The quantile-quantile plot for $G/V$, and the ACF for $G/V$ and for $|G/V|$, where $G$ is total S\&P returns. This shows $G/V$ are IID Gaussian.}
\label{fig:growth}
\end{figure}

\subsection{Combined model of S\&P returns and earnings} This gives us the following model for S\&P returns and earnings. This is an improvement of our previous article \cite{Valuation} where we did not model earnings growth. Therefore, our model in that article was not complete. Combine~\eqref{eq:w-e},~\eqref{eq:bubble-def},~\eqref{eq:bubble-ar} with definitions of $G$ and $\overline{G}$.

\begin{table}[ht]
\centering
\begin{tabular}{|ccccccccc|}
\hline
Reg & Size & Stdev & Skew & Kurt & SW & JB & L1O & L1A \\
\hline
$Z_V$ & 96 & 0.3644 & 0.590 & 0.057 & 0.009 & 0.061 & 0.401 & 0.237\\
$Z_G$ & 97 & 0.0218 & 0.614 & 2.903 & 0.000 & 0.000 & 0.474 & 0.253\\
$U$  & 97 & 0.1767 & -0.816 & 1.102 & 0.003 & 0.000 & 0.291 & 0.608\\
\hline
\end{tabular}
\caption{Analysis of regression residuals. Skew = skewness, Kurt = kurtosis, SW and JB =  Shapiro-Wilk and Jarque-Bera $p$-values, L1O and L1A = L1 norm for the first 5 lags of the ACF for $Z$ and $|Z|$, Length = the size of $Z$.}
\label{table:residuals-bubble}
\end{table}

We consider modeling $G/V$ as IID. But we also add the intercept, similarly to switching from~\eqref{eq:new-linear} to~\eqref{eq:vol-linear}. We get:
\begin{equation}
\label{eq:earnings-vol}
\frac{G(t)}{V(t)} = g + \frac{c_G}{V(t)} + Z_G(t) \Leftrightarrow G(t) = gV(t) + c_G + V(t)Z_G(t).
\end{equation}
Combning~\eqref{eq:trailing},~\eqref{eq:bubble-def}, and~\eqref{eq:bubble-ar}, we get the model:

\begin{align}
\label{eq:bubble-model}
\begin{split}
\overline{E}(t) &= \frac1L\sum\limits_{s=0}^{L-1}E(t-s);\\
G(t) &= \ln E(t) - \ln E(t-1),\\
\overline{G}(t) &= \ln\overline{E}(t) - \ln\overline{E}(t-1);\\
H(t) &= \sum\limits_{s=1}^tQ(s)  - \ln\overline{E}(t) - ct;\\
H(t) - h &= b(H(t-1) - h) + U(t);\\
\ln V(t) &= \alpha_V  + \beta_V\ln V(t-1) + Z_V(t);\\
\end{split}
\end{align}

We have $c_G = 0.123275, g = -0.005788$. The $p$-values for the Student $T$-test are $0.8\%$ and $32\%$, respectively. The correlation matrix of residuals is 
$$
\mathbf{\Sigma}_3 = 
\begin{bmatrix}
1 & -0.18 & -0.42\\
-0.18 & 1 & 0.1\\
-0.42 & 0.1 & 1\\
\end{bmatrix}
$$
All these computations in this section are done in Python code \texttt{bubble.py}

\begin{theorem} If $\beta_V, b \in (0, 1)$ and $(Z_G(t), Z_V(t), U(t))$ are IID with mean zero, then the system~\eqref{eq:bubble-model} has a stationary version. 
\end{theorem}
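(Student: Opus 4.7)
The plan is to construct a jointly stationary version of the observable processes $(V, G, \overline{G}, H, Q)$ bottom-up from the IID innovation vector $(Z_V(t), Z_G(t), U(t))$, resolving the two AR(1)-type equations by the standard infinite moving-average representation, and then handling the earnings-averaging term as a measurable function of a finite sliding window.

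First, since $\beta_V \in (0,1)$ and $Z_V$ is IID with mean zero (hence integrable), the equation $\ln V(t) = \alpha_V + \beta_V\ln V(t-1) + Z_V(t)$ admits the unique almost surely convergent stationary solution
\begin{equation*}
\ln V(t) \;=\; \frac{\alpha_V}{1-\beta_V} \;+\; \sum_{k=0}^{\infty}\beta_V^{k}\,Z_V(t-k),
\end{equation*}
and I set $V(t) = \exp(\ln V(t))$. Using~\eqref{eq:earnings-vol}, define $G(t) := g\,V(t) + c_G + V(t)\,Z_G(t)$; since the innovation at time $t$ is jointly drawn with $Z_V(t)$ and is independent of the strict past, $(V,G)$ is jointly stationary. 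The same AR(1) argument applied with coefficient $b \in (0,1)$ and IID mean-zero innovation $U$ to~\eqref{eq:bubble-ar} yields a stationary $H(t) = h + \sum_{k\ge 0} b^{k}\,U(t-k)$, jointly stationary with $(V,G)$.

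The main obstacle is the level $\ln\overline{E}(t)$ appearing in the definition of $H$ in~\eqref{eq:bubble-model}: under IID growth increments $G(t)$, the process $\ln E(t)$ is a random walk with drift, so $E(t)$ itself admits no stationary version. The resolution is to interpret stationarity of the system as joint stationarity of the \emph{increments} $\overline{G}$, together with $V,G,H,Q$. I would write
\begin{equation*}
\overline{G}(t) \;=\; \ln\overline{E}(t) - \ln\overline{E}(t-1) \;=\; \ln\frac{\sum_{s=0}^{L-1}E(t-s)/E(t-1)}{\sum_{s=0}^{L-1}E(t-1-s)/E(t-1)},
\end{equation*}
and observe that each ratio $E(t-s)/E(t-1)$ equals the exponential of a finite signed partial sum drawn from $\{G(t-L+2),\ldots,G(t)\}$. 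Hence $\overline{G}(t)$ is a fixed measurable function $\Psi\bigl(G(t),G(t-1),\ldots,G(t-L+1)\bigr)$ of $L$ consecutive values of the stationary process $G$, so $\overline{G}$ is stationary and jointly stationary with $(V,G,H)$.

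Finally, differencing the level relation $H(t) = \sum_{s=1}^{t} Q(s) - \ln\overline{E}(t) - c\,t$ yields $H(t) - H(t-1) = Q(t) - \overline{G}(t) - c$, and I take this as the definition $Q(t) := H(t) - H(t-1) + \overline{G}(t) + c$, which is stationary by construction. A representative of $E(t)$ and $\overline{E}(t)$ can then be recovered by fixing any $E(0) > 0$ and iterating $E(t) = E(t-1)\exp(G(t))$, after which all equations of~\eqref{eq:bubble-model} hold, the level identity for $H$ being satisfied up to an irrelevant deterministic constant absorbed into the initial value of $\ln\overline{E}$. The delicate step is precisely this reinterpretation: one cannot ask for stationarity of the level $E$, only of $(V, G, \overline{G}, H, Q)$, and one must verify that the AR(1) equation for $H$ together with the identity $Q(t) = H(t)-H(t-1)+\overline{G}(t)+c$ indeed encodes the content of the level definition of $H$ in~\eqref{eq:bubble-model}.
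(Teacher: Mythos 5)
Your proposal is correct and follows essentially the same route as the paper: obtain $V$, $H$ (and hence $G$) from the standard stationary AR(1) representation, reduce $Q$ to $Q(t) = H(t) - H(t-1) + c + \overline{G}(t)$, and show $\overline{G}(t)$ is stationary by expressing it as a fixed measurable function of the $L$ consecutive growth terms $G(t), \ldots, G(t-L+1)$ (the paper normalizes the earnings ratios by $E(t-L)$ rather than $E(t-1)$, an immaterial difference). Your explicit remark that only the increments, not the level $E$ itself, can be stationary is a point the paper leaves implicit, but it does not change the argument.
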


\begin{proof}
Stationarity of $V$ and $H$ is immediate. Clearly, $G$ is also stationary. Finally, let us show stationarity of $Q$. We can rewrite~\eqref{eq:bubble-def} and~\eqref{eq:bubble-ar} as
\begin{align*}
Q(t) &= H(t) - H(t-1) + c + \overline{G}(t);\\
\overline{G}(t) &= \ln\overline{E}(t) - \ln\overline{E}(t-1).\\
\end{align*}
It suffices to prove $\overline{G}(t)$ is stationary. But we can rewrite
\begin{align*}
\overline{G}(t) &= \ln\frac{\overline{E}(t)}{E(t-L)} - \ln\frac{\overline{E}(t-1)}{E(t-L)}  = \ln\sum\limits_{s=0}^{L-1}\frac{E(t-s)}{E(t-L)} - \ln\sum\limits_{s=0}^{L-1}\frac{E(t-s-1)}{E(t-L)}.
\end{align*}
The terms $E(t-s)/E(t-L)$ are stationary for each $s$, and are jointly stationary for $s = 0, \ldots, L$, for the simple reason that these  all are products of $G(t), \ldots, G(t-L+1)$. For example, $E(t)/E(t-L) = G(t)\cdot\ldots\cdot G(t-L+1)$. This proves stationarity for $\overline{G}(t)$, and therefore for $Q(t)$. 
\end{proof}

\subsection{Lag selection} Try other averaging window $L$ from 1 to 10: The Python code used in this section is \texttt{selection.py}
\begin{equation}
\label{eq:ar-bubble}
H(t) - h = b(H(t-1) - h) + U(t).
\end{equation}
We analyze residuals $U$; results are in~\textsc{Table}~\ref{table:selection}. We conclude that all cases allow us to model $U$ as IID, so~\eqref{eq:ar-bubble} is a true autoregression. However,  we cannot model $U$ as Gaussian, except for the cases $L = 2$ and $L = 3$. 

We chose $L = 10$ to make our research match Shiller CAPE ratio. We think that low $L$ would make our valuation measure fluctuate wildly in the short run, together with annual earnings. We do not need to simulate the residuals $U$ from~\eqref{eq:ar-bubble} in our simulations, anyway. 

\begin{table}[h]
\centering
\begin{tabular}{|cccccccc|}
\hline
$L$ & $R^2$ & SW & JB & Skew & Kurt & L1O & L1A \\
\hline
1 & 17\% & 2.8\% & 0.02\% & 0.55 & 1.7 & 0.062 & 0.604 \\
2 & 12\% & 3.1\% & 25\% & -0.4 & 0.18 & 0.39 & 0.43 \\
3 & 11\% & 20\% & 12\% & -0.38 & 0.71 & 0.41 & 0.378\\
4 & 10\% & 5.9\% & 1.7\% & -0.31 & 1.27 & 0.392 & 0.546\\
5 & 10\% & 9.1\% & 3.1\% & -0.52 & 0.78 & 0.446 & 0.946\\
6 & 10\% & 11\% & 5.2\% & -0.48 & 0.74 & 0.438 & 0.753\\
7 & 10\% & 13\% & 3.5\% & -0.56 & 0.65 & 0.436 & 0.725\\
8 & 9\% & 10\% & 7.5\% & -0.53 & 0.39 & 0.302 & 0.761\\
9 & 9\% & 2.6\% & 0.8\% & -0.67 & 0.73 & 0.277 & 0.632\\
10 & 10\% & 0.3\% & 0.04\% & -0.82 & 1.1 & 0.291 & 0.608\\
\hline
\end{tabular}
\bigskip
\caption{Residuals}
\label{table:selection}
\end{table}

In addition, we might include volatility in~\eqref{eq:ar-bubble}: $H(t) - h = b(H(t-1) - h) + V(t)U(t)$. However, analysis of the ACF for the residuals $U(t)$ shows it works only for short averaging window. To keep it simple, we will focus in this article on the model~\eqref{eq:ar-bubble}, delegating the other equation to further research.

\section{The Main Model}

\subsection{Overview} We now combine three parts in the Python code \texttt{main.py}

\begin{itemize}
\item the simplified model~\eqref{eq:duration-system} of three asset classes from Section 4, with only volatility $V$ and rate $R$ as factors
\item the combined model of S\&P earnings and returns~\eqref{eq:bubble-model} from Section 5, without international stocks and corporate bonds
\item the term spread $S$ as factor for returns and earnings growth
\end{itemize}

\subsection{Term spread} Since the long-short term spread $S(t)$ can be both positive and negative, we apply the autoregression of order 1 without taking the logarithm:
\begin{equation}
\label{eq:spread}
S(t) = a_S + b_SS(t-1) + Z_S(t).
\end{equation}
We estimate $a_S = 0.6437$ and $b_S = 1 - 0.4605$. The $p$-values for Student $T$-test for $a_S = 0$ or for $b_S = 1$ are less than $0.1\%$. 

\subsection{Earnings growth} We model earnings growth $G(t)$ as linear regression with respect to change in rate $R$, and term spread $S$:
\begin{equation}
\label{eq:growth-reg}
G(t) = c_G - d_G(R(t) - R(t-1)) - b_GS(t-1) + Z_G(t).
\end{equation}
But remember that we divide $G(t)$ by $V(t)$ to make these growth terms closer to Gaussian. Similarly to the switch from~\eqref{eq:simple-linear} to~\eqref{eq:vol-linear}, we need to change~\eqref{eq:growth-reg} to
\begin{equation}
\label{eq:growth-reg-vol}
G(t) = a_G + b_GS(t-1) + c_GV(t) - d_G(R(t) - R(t-1)) + V(t)Z_G(t). 
\end{equation}

\begin{table}[h]
\centering
\begin{tabular}{lrrrrrr}
\hline
\textbf{variable} & \textbf{coefficient}  & \textbf{estimate} & \textbf{stderr} & \textbf{t} & \textbf{P$>\vert$t$\vert$} & \textbf{[0.025, 0.975]} \\
\hline
constant & $a_G$ & 0.07756  & 0.048 & 1.630  & 0.107 & [-0.017, 0.172] \\
term spread & $b_G$ & 0.04786  & 0.018 & 2.667  & 0.009 & [0.012, 0.083] \\
volatility & $c_G$   & 0.007841 & 0.006 & -1.370 & 0.174 & [-0.019, 0.004] \\
rate change  & -$d_G$ & 0.03721 & 0.025 & 1.466  & 0.146 & [-0.013, 0.088] \\
\hline
\end{tabular}
\caption{Earnings growth regression results for~\eqref{eq:growth-reg-vol}. Here $R^2 = 13.8\%$. Note $c_G$ and $d_G$ are not significant, we could remove them. But we chose to retain them, to stress dependence upon the BAA rate and term spread.}
\label{table:regression-earnings}
\end{table}

Also, we model S\&P returns $Q(t)$ as a linear regression versus change in rate $R(t) - R(t-1)$, term spread $S(t-1)$, and the valuation measure $H(t-1)$. It is important to divide by the annual volatility, as in~\eqref{eq:vol-linear} or~\eqref{eq:growth-reg-vol}:
\begin{equation}
\label{eq:all-returns}
Q(t) = a_Q - k_QH(t-1) - d_Q(R(t) - R(t-1)) - b_QS(t-1) + c_QV(t) + V(t)Z_Q(t).
\end{equation}

\begin{table}[h]
\centering
\begin{tabular}{lrrrrrr}
\hline
\textbf{variable} & \textbf{coefficient} & \textbf{estimate} & \textbf{stderr} & \textbf{t} & \textbf{P$>\vert$t$\vert$} & \textbf{[0.025, 0.975]} \\
\hline
constant & $a_Q$  & 0.26851  & 0.032 & 8.440  & 0.000 & [0.205, 0.332] \\
term spread & -$b_Q$ & -0.03412 & 0.012 & -2.913 & 0.004 & [-0.057, -0.011] \\
volatility & $c_Q$ & -0.013568 & 0.004 & -3.523 & 0.001 & [-0.021, -0.006] \\
rate change & -$d_Q$ & -0.078238 & 0.017 & -4.738 & 0.000 & [-0.111, -0.045] \\
valuation measure & -$k_Q$ & -0.1644 & 0.045 & -3.667 & 0.000 & [-0.253, -0.075] \\
\hline
\end{tabular}
\caption{Domestic returns regression results~\eqref{eq:all-returns}; $R^2 = 53.1\%$. All coefficients are significantly different from zero. Residuals are IID Gaussian.}
\label{tab:regression_results_model3}
\end{table}

Same is done for international returns: Although these are non-US stocks, we study their dependence upon US financial factors: The valuation measure $H$, the change in BAA rate, and the term spread. 
\begin{equation}
\label{eq:intl-returns}
I(t) = a_I - k_IH(t-1) - d_I(R(t) - R(t-1)) - b_IS(t-1) + c_IV(t) + V(t)Z_I(t).
\end{equation}
Dependence is much weaker, but sometimes still significant. See results in \textsc{Table}~\ref{table:intl-regression-results}.

\begin{table}[h]
\centering
\begin{tabular}{lrrrrrr}
\hline
\textbf{variable} & \textbf{coefficient} & \textbf{estimate} & \textbf{stderr} & \textbf{t} & \textbf{P$>\vert$t$\vert$} & \textbf{[0.025, 0.975]} \\
\hline
constant  & $a_I$    & 0.2627 & 0.077 & 3.425  & 0.001 & [0.109, 0.417] \\
term spread  & -$b_I$   & 0.0030  & 0.019 & 0.160  & 0.874 & [-0.035, 0.041] \\
volatility     & $c_I$    & -0.0186 & 0.008 & -2.336 & 0.024 & [-0.035, -0.003] \\
rate change & -$d_I$  & -0.0494 & 0.025 & -2.006 & 0.050 & [-0.099, 0.000055] \\
valuation measure  & -$k_I$ & -0.0955 & 0.083 & -1.155 & 0.254 & [-0.262, 0.071] \\
\hline
\end{tabular}
\caption{International returns regression results~\eqref{eq:intl-returns}; $R^2 = 41.5\%$. Coefficients $b_I$ and $k_I$ are not significantly different from zero.}
\label{table:intl-regression-results}
\end{table}

We consider a version of~\eqref{eq:intl-returns} without the spread variable: $b_I = 0$. We see that $R^2$ is not improved. We use this simplified version for the financial simulator, see \textsc{Table}~\ref{table:intl-regression-results-cut}.

\begin{table}[h]
\centering
\begin{tabular}{lrrrrrrr}
\hline
\textbf{variable} & \textbf{coefficient} & \textbf{estimate} & \textbf{stderr} & \textbf{t} & \textbf{P$>|\text{t}|$} & \textbf{[0.025, 0.975]} \\
\hline
constant   &  $a_I$  & 0.2689 & 0.065 & 4.108 & 0.000 & [0.137, 0.400] \\
volatility     & $c_I$     & -0.0188 & 0.008 & -2.412 & 0.019 & [-0.034, -0.003] \\
rate change & -$d_I$ & -0.0514 & 0.021 & -2.447 & 0.018 & [-0.094, -0.009] \\
valuation measure    & -$k_I$ & -0.0941 & 0.081 & -1.156 & 0.253 & [-0.258, 0.069] \\
\hline
\end{tabular}
\caption{International returns regression results~\eqref{eq:intl-returns} without the spread variable; $R^2 = 41.5\%$, same as for the general version of this regression. Residuals are IID Gaussian.}
\label{table:intl-regression-results-cut}
\end{table}

\subsection{Combined equations for the complete model} Now let us combine all regression equations for returns $Q, I, B$ of three asset classes: Domestic stocks~\eqref{eq:all-returns}, international stocks~\eqref{eq:intl-returns}, and domestic bonds~\eqref{eq:B}. We get the following system~\eqref{eq:complete-returns}:

\begin{align}
\label{eq:complete-returns}
\begin{split}
Q(t) &= a_Q + b_QV(t) - c_QS(t-1) - d_Q(R(t) - R(t-1)) - k_QH(t-1) + V(t)Z_Q(t);\\
I(t) &= a_I + b_IV(t) - c_IS(t-1) - d_I(R(t) - R(t-1)) - k_IH(t-1) + V(t)Z_I(t);\\
B(t) &= 0.01R(t-1) - a_B - d_B(R(t) - R(t-1)) + Z_B(t).
\end{split}
\end{align}

Also, combine evolution equations~\eqref{eq:vol},~\eqref{eq:baa},~\eqref{eq:spread},~\eqref{eq:growth-reg-vol},~\eqref{eq:bubble-model} (except the first one) of the four factors in~\eqref{eq:complete-factors}: volatility $V$, BAA rate $R$, term spread $S$, and the valuation measure $H$.
\begin{align}
\label{eq:complete-factors}
\begin{split}
\ln V(t) &= a_V + b_V\ln V(t-1) + Z_V(t);\\
\ln R(t) &= a_R + b_R\ln R(t-1) + Z_R(t);\\
S(t) &= a_S + b_SS(t-1) + Z_S(t);\\
G(t) &= a_G + b_GV(t) - c_GS(t-1) - d_G(R(t) - R(t-1)) + V(t)Z_G(t);\\
H(t) &= Q(t) - \overline{G}(t) + c + H(t-1).
\end{split}
\end{align}

In~\eqref{eq:complete-factors}, we have one nonrandom (deterministic) equation for $H$. Each another equation contains a series of IID residuals. After we fit the autoregression (with trend) for $H$, we do not use the innovations $U$ of that autoregression. This stands in contrast to~\eqref{eq:complete-returns} where all three equations have random residuals. Thus the entire simulation model~\eqref{eq:complete-returns} and~\eqref{eq:complete-factors} contain only 7 and not 8 series of innovations. 

Also, in~\eqref{eq:complete-factors} we define $G(t) = \ln E(t) - \ln E(t-1)$ and $\overline{G}(t) = \ln\overline{E}(t) - \ln\overline{E}(t-1)$ are growth rates for annual earnings and 10-year trailing averaged earnings. Note that $\overline{G}(t)$ is a function of $G(t), E(t-1), \ldots, E(t-10)$. 

Each residuals series is IID. But some of them are not Gaussian. Thus it is reasonable to model these residuals as IID. By construction, they have mean zero. Below, we present \textsc{Table}~\ref{table:residuals} of statistical functions for each series of these residuals. We include the skewness, kurtosis (both being zero for the normal distribution), $p$-values for Shapiro-Wilk and Jarque-Bera normality tests, and the sum of absolute values (the $L_1$ norm) of the first 5 lags for ACF of original and absolute values of residuals. We see that all residuals for factors (volatiltiy, spread, rate, growth) are IID but not Gaussian. However, residuals for all three returns are IID Gaussian. The main goal is accomplished: We found a time series model where innovations are what they are supposed to be: IID. However, we need to use some other method (not a multivariate Gaussian distribution) to model them. 

\begin{table}[ht]
\centering
\begin{tabular}{|c|cccccccc|}
\hline
Reg & Size & Stdev & Skew & Kurt & SW & JB & L1O & L1A \\
\hline
$Z_V$ & 96 &  0.3644 & 0.590 & 0.057 & 0.009 & 0.061 & 0.401 & 0.237 \\
$Z_S$ & 97 & 1.0537 & 0.475 & 2.150 & 0.008 & 0.000 & 0.418 & 0.689 \\
$Z_R$ & 97 & 0.1357 & 0.814 & 1.409 & 0.007 & 0.000 & 0.177 & 0.360 \\
$Z_G$ & 97 & 0.021 & 0.614 & 2.903 & 0.000 & 0.000 & 0.474 & 0.253 \\
$Z_Q$ & 97 & 0.0135 & 0.039 & 0.157 & 0.344 & 0.940 & 0.413 & 0.590 \\
$U$ & 97 & 0.1767 & -0.816 & 1.102 & 0.003 & 0.000 & 0.291 & 0.608 \\
$Z_I$ & 55 & 0.0181 & 0.073 & -0.512 & 0.621 & 0.723 & 0.429 & 0.635 \\
$Z_B$ & 52 & 0.0263 & 0.193 & 0.238 & 0.857 & 0.800 & 0.878 & 0.706 \\
\hline
\end{tabular}
\caption{Analysis of regression residuals. Skew = skewness, Kurt = kurtosis, SW and JB =  Shapiro-Wilk and Jarque-Bera $p$-values, L1O and L1A = L1 norm for the first 5 lags of the ACF for $Z$ and $|Z|$, Length = the size of $Z$}
\label{table:residuals}
\end{table}

\begin{table}[h!]
\centering
\begin{tabular}{|l|rrrrrrrr|}
\hline
$Z$ & $Z_V$ & $Z_S$ & $Z_R$ & $Z_G$ & $U$ & $Z_Q$ & $Z_I$ & $Z_B$ \\
\hline
$Z_V$ & 1.0000  & 0.0538  & 0.2978  & -0.1800 & -0.4234 & -0.0516 & -0.0753 & 0.0725 \\
$Z_S$ & 0.0538  & 1.0000  & -0.1817 & -0.0661 & -0.0608 & -0.1581 & -0.0397 & -0.0951 \\
$Z_R$ & 0.2978  & -0.1817 & 1.0000  & -0.1628 & -0.5177 & -0.0979 & -0.0265 & -0.1470 \\
$Z_G$ & -0.1800 & -0.0661 & -0.1628 & 1.0000  & 0.1524  & 0.1204  & 0.0724  & -0.1878 \\
$U$ & -0.4234 & -0.0608 & -0.5177 & 0.1524  & 1.0000  & 0.7038  & 0.3228  & 0.2714 \\
$Z_Q$ & -0.0516 & -0.1581 & -0.0979 & 0.1204  & 0.7038  & 1.0000  & 0.2768  & 0.2817 \\
$Z_I$ & -0.0753 & -0.0397 & -0.0265 & 0.0724  & 0.3228  & 0.2768  & 1.0000  & -0.0948 \\
$Z_B$  & 0.0725  & -0.0951 & -0.1470 & -0.1878 & 0.2714  & 0.2817  & -0.0948 & 1.0000 \\
\hline
\end{tabular}
\caption{Correlation matrix of series of residuals, see \texttt{corrMatrix.xlsx}}
\end{table}

\subsection{Main stability results} 

Consider the complete system of factors and asset returns
\begin{equation}
\label{eq:extended-system}
(V, R, S, G, \overline{G}, H, Q, I, B),
\end{equation}
evolving as~\eqref{eq:complete-returns} and~\eqref{eq:complete-factors}. The system~\eqref{eq:extended-system} is not Markov. Indeed, the averaged earnings growth $\overline{G}$ depends on the last 11 years of earnings, not the last two years. But we can fix the last 10 years of earnings. Under this condition, the system~\eqref{eq:extended-system} is Markov: Its value at step $t$ depends only on the value at step $t-1$. Of course, this Markov process is not time homogeneous, since the equations use these initial 10 years of earnings anyway. 

\begin{theorem}
\label{thm:new}
Assume the series $(Z_Q(t), Z_I(t), Z_B(t), Z_G(t), Z_R(t), Z_S(t), Z_V(t))$ in $\mathbb R^7$ are IID with mean zero. Assume  $k_Q \in (0, 2)$ and $\beta_R, \beta_S, \beta_V \in (-1, 1)$. Then the extended system~\eqref{eq:extended-system} has a stationary version. 
\label{thm:main}
\end{theorem}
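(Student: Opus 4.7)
My plan is to build stationarity for the extended system in a cascade, working from the ``upstream'' factors that are driven only by IID noise to the ``downstream'' factors and returns that depend on them. The dependency structure of \eqref{eq:complete-returns} and \eqref{eq:complete-factors} is acyclic except for a single scalar feedback loop between $Q$ and $H$, and that loop is the main place where the hypotheses on $k_Q$ are used.

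First I would handle the three upstream AR(1) factors $\ln V$, $\ln R$, and $S$. Each satisfies an AR(1) recursion with slope in $(-1,1)$ driven by an IID scalar innovation, so each admits the unique stationary backward moving-average representation $\sum_{j \ge 0} \beta^j Z(\cdot - j)$. Because the innovation vector $(Z_V, Z_R, Z_S, Z_G, Z_Q, Z_I, Z_B)$ is IID in $t$, the triple $(V, R, S)$ is \emph{jointly} stationary. Feeding this into the equation for $G(t)$ in \eqref{eq:complete-factors} expresses $G(t)$ as a measurable function of $V(t)$, $S(t-1)$, $R(t) - R(t-1)$ and $Z_G(t)$, so $G$ is jointly stationary with $(V, R, S)$. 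Reusing the identity from the proof of Theorem 2,
\[
\overline{G}(t) = \ln \sum_{s=0}^{L-1} \frac{E(t-s)}{E(t-L)} - \ln \sum_{s=0}^{L-1} \frac{E(t-s-1)}{E(t-L)},
\]
each ratio $E(t-s)/E(t-L)$ is a finite product of shifted $G$'s and hence a measurable function of the stationary process $G$, so $\overline{G}$ is jointly stationary with everything above.

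The substantive step, and the main obstacle, is $H$, because \eqref{eq:complete-returns} couples $Q(t)$ to $H(t-1)$ while the last line of \eqref{eq:complete-factors} defines $H(t)$ in terms of $Q(t)$. To break the loop I substitute the $Q$ equation into $H(t) = Q(t) - \overline{G}(t) + c + H(t-1)$ and collect the $H(t-1)$ terms:
\[
H(t) = (1 - k_Q)\, H(t-1) + W(t),
\]
where
\[
W(t) = a_Q + c + b_Q V(t) - c_Q S(t-1) - d_Q\bigl(R(t) - R(t-1)\bigr) - \overline{G}(t) + V(t) Z_Q(t)
\]
is a measurable functional of the already-stationary processes together with the IID noise $Z_Q(t)$, hence is itself stationary. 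The hypothesis $k_Q \in (0,2)$ gives $|1 - k_Q| < 1$, so the scalar recursion with stationary forcing $W$ has the unique stationary solution $H(t) = \sum_{j \ge 0} (1 - k_Q)^j W(t-j)$, with a.s.\ absolute convergence of the series under any mild integrability condition on $W(0)$ (for example $E \log^+|W(0)| < \infty$, which follows from finite variance of the innovations). Thus $H$ is jointly stationary with $(V, R, S, G, \overline{G})$.

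Finally, with all four factors $(V, R, S, H)$ jointly stationary together with $G$ and $\overline{G}$, the three returns $Q$, $I$, $B$ defined by \eqref{eq:complete-returns} are measurable functions of jointly stationary inputs and IID noise, so the full vector \eqref{eq:extended-system} is jointly stationary. The only genuine obstacle in the argument is the $Q$--$H$ feedback step; the range $k_Q \in (0,2)$ is precisely what turns the hidden coupled system into a contractive AR(1), and everything upstream and downstream is a routine cascade once that contraction is in place.
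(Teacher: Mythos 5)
Your proposal follows essentially the same route as the paper: stationarity of the upstream AR(1) factors $V$, $R$, $S$, then $G$ and $\overline{G}$ as measurable functions of these, then collapsing the $Q$--$H$ feedback into the scalar recursion $H(t) = (1-k_Q)H(t-1) + W(t)$ with $|1-k_Q|<1$, and finally the returns as downstream functions. The one place where you assert rather than prove is exactly where the paper spends most of its effort: verifying the Brandt/Bougerol--Picard moment condition $\mathbb E\,(\ln|W(0)|)_+ < \infty$ for the forcing term. This is not automatic for the $\overline{G}(t)$ contribution, since $\overline{G}(t)$ is a difference of logarithms of \emph{sums} of earnings ratios; the paper handles it with a dedicated sum--product lemma for $\varphi(x)=(\ln x)_+$ together with the ratio inequality
\[
\min_i \ln\frac{x_i}{y_i} \;\le\; \ln\frac{x_1+\cdots+x_n}{y_1+\cdots+y_n} \;\le\; \max_i \ln\frac{x_i}{y_i},
\]
which yields $|\overline{G}(t)| \le L^{-1}\sum_{k=0}^{L-1}|G(t-k)|$ almost surely and hence the required log-moment. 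Also note that you justify the integrability by ``finite variance of the innovations,'' which is not among the hypotheses; the theorem assumes only mean zero, i.e.\ $\mathbb E|Z|<\infty$, which does suffice because $\mathbb E|X|<\infty$ implies $\mathbb E(\ln|X|)_+<\infty$, but that is the implication you should invoke. With that verification filled in, your argument coincides with the paper's.
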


\subsection{Proof of the main result} If we prove stationarity of other components, then the stationarity of~\eqref{eq:B} and~\eqref{eq:intl-returns} are immediate. Stationarity for $V$, $R$, $S$ follows from standard properties of autoregressions, stated in Lemma~\ref{lemma:AR} below. Now rewrite the equations~\eqref{eq:complete-factors}:
\begin{align}
\begin{split}
H(t) &= H(t-1)(1 - k_Q) + F(t),\\
F(t) &:= a_Q - c - d_Q(R(t) - R(t-1)) - b_QS(t-1) + c_QV(t) + V(t)Z_Q(t) - \overline{G}(t).
\end{split}
\end{align}
Define the function $\varphi(x) := (\ln x)_+ := \max(\ln x, 0)$. It is clear that 
\begin{equation}
\label{eq:X-log}
\mathbb E|X| < \infty \Rightarrow \mathbb E\varphi(|X|) < \infty. 
\end{equation}
Apply Lemma~\ref{lemma:AR}. From~\eqref{eq:X-log}, for each of 5 series of residuals $Z_Q(t), Z_G(t), Z_R(t), Z_S(t), Z_V(t)$, 
\begin{equation}
\label{eq:log-residuals}
\mathbb E\varphi(|Z_A(t)|) < \infty,\quad A = G, Q, R, S, V.
\end{equation}

Following the main results of \cite{Brandt0, Brandt}, we need to show two results:
$\ln|1 - k_Q| < 0$ (which trivially follows from $0 < k_Q < 2$) and $\mathbb E\varphi(|F(t)|) < \infty$. Using Lemma~\ref{lemma:sum-product}, we reduce it to:
\begin{equation}
\label{eq:ar-varphi}
\mathbb E\varphi(R(t)) < \infty,\quad \mathbb E\varphi(R(t-1)) < \infty; \quad \mathbb E\varphi(V(t)) < \infty;
\end{equation}
\begin{equation}
\label{eq:ar-s}
\mathbb E\varphi(S(t-1)) < \infty; 
\end{equation}
\begin{equation}
\label{eq:G}
\mathbb E\varphi(\overline{G}(t)) < \infty.
\end{equation} 
The statements~\eqref{eq:ar-varphi} follow from Lemma~\ref{lemma:AR}. The statement~\eqref{eq:ar-s} also does, using~\eqref{eq:X-log}. Finally, let us show~\eqref{eq:G}. It is easy to see that for positive numbers $x_1, \ldots, x_n, y_1, \ldots, y_n$, we have:
$$
 \min\left(\frac{x_1}{y_1}, \ldots, \frac{x_n}{y_n}\right) \le \frac{x_1 + \ldots + x_n}{y_1 + \ldots + y_n} \le \max\left(\frac{x_1}{y_1}, \ldots, \frac{x_n}{y_n}\right).
$$
Applying the logarithm to both sides, we get:
\begin{equation}
\label{eq:min-max}
\min\left(\ln\frac{x_1}{y_1}, \ldots, \ln\frac{x_n}{y_n}\right) \le \ln\frac{x_1 + \ldots + x_n}{y_1 + \ldots + y_n} \le \max\left(\ln\frac{x_1}{y_1}, \ldots, \ln\frac{x_n}{y_n}\right).
\end{equation}
However, for real numbers $z_1, \ldots, z_n$, we have:
$$
-\sum_{i=1}^n|z_i| \le \min\limits_{i = 1, \ldots, n}(z_i) \le \max\limits_{i = 1, \ldots, n}(z_i) \le \sum_{i=1}^n|z_i|.
$$
Applying this to~\eqref{eq:min-max}, we have:
\begin{equation}
\label{eq:x-y}
-\sum_{i=1}^n\left|\ln\frac{x_i}{y_i}\right| \le \ln\frac{x_1+\ldots +x_n}{y_1+\ldots+y_n} \le \sum_{i=1}^n\left|\ln\frac{x_i}{y_i}\right|.
\end{equation}
Finally, let $n = L$, and let $x_k = E(t-k)$ and $y_k = E(t-k-1)$ for $k = 1, \ldots, L$ in~\eqref{eq:x-y}. Using~\eqref{eq:bubble-model}, we get almost surely
\begin{equation}
\label{eq:AS}
|\overline{G}(t)| \le \frac1L\sum\limits_{k=0}^{L-1}|G(t-k)|.
\end{equation}
To show~\eqref{eq:G}, we need to apply Lemma~\ref{lemma:sum-product} and~\eqref{eq:AS}, and finally to show 
$$
\mathbb E\varphi(|G(t)|) < \infty.
$$
But this, in turn, follows from Lemma~\ref{lemma:sum-product} and~\eqref{eq:X-log}.  Thus the proof of ~\eqref{eq:G}, and with it the main result, is complete.

\begin{lemma} If two random variables $X, Y$ (not necessarily independent) have 
\begin{equation}
\label{eq:finite-phi}
\mathbb E\varphi(|X|) < \infty,\quad \mathbb E\varphi(|Y|) < \infty,
\end{equation}
then the sum and product of these two random variables also satisfy these properties:
$$
\mathbb E\varphi(|X + Y|) < \infty,\quad \mathbb E\varphi(|XY|) < \infty.
$$
\label{lemma:sum-product}
\end{lemma}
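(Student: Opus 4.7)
The plan is to reduce both claims to two elementary inequalities about the function $\varphi(x) = (\ln x)_+$, and then take expectations using linearity and monotonicity. No integrability beyond what is assumed in~\eqref{eq:finite-phi} is needed, so the result will be quantitative: a concrete upper bound in terms of $\mathbb E\varphi(|X|)$ and $\mathbb E\varphi(|Y|)$.

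First I would handle the product, which is the easier half. Since $\ln|XY| = \ln|X| + \ln|Y|$ on the set $\{XY\neq 0\}$, and the positive part is subadditive, namely $(a+b)_+ \le a_+ + b_+$ for all $a,b\in\mathbb R$, I get the almost sure bound
\begin{equation*}
\varphi(|XY|) = (\ln|X| + \ln|Y|)_+ \le (\ln|X|)_+ + (\ln|Y|)_+ = \varphi(|X|) + \varphi(|Y|),
\end{equation*}
valid also on $\{XY=0\}$ since there $\varphi(|XY|) = \varphi(0) = 0$ (interpreting $\ln 0 = -\infty$, whose positive part is $0$). Taking expectations and using~\eqref{eq:finite-phi} yields $\mathbb E\varphi(|XY|) < \infty$.

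For the sum, I would use the pointwise bound $|X+Y| \le 2\max(|X|,|Y|)$. Applying $\varphi$, which is nondecreasing on $[0,\infty)$, gives
\begin{equation*}
\varphi(|X+Y|) \le \varphi\bigl(2\max(|X|,|Y|)\bigr) \le \ln 2 + \max\bigl(\varphi(|X|), \varphi(|Y|)\bigr) \le \ln 2 + \varphi(|X|) + \varphi(|Y|),
\end{equation*}
where in the middle step I used $(\ln 2 + t)_+ \le \ln 2 + t_+$ for $t\in\mathbb R$ (since $\ln 2 > 0$), and in the last step that $\max(a,b) \le a + b$ for nonnegative $a,b$. Taking expectations and invoking~\eqref{eq:finite-phi} once more gives $\mathbb E\varphi(|X+Y|) < \infty$, completing the proof.

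There is no real obstacle here; the only thing to keep track of is that $\varphi$ is not additive (only $\ln$ is), so one must pass through the positive part via the subadditivity inequality $(a+b)_+ \le a_+ + b_+$ and the monotonicity of $\varphi$. The argument does not use independence of $X$ and $Y$, which is why the lemma is applicable to the interdependent residual terms appearing in $F(t)$ in the proof of Theorem~\ref{thm:main}.
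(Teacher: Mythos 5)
Your proof is correct, but for the sum it takes a genuinely different route from the paper. For the product, both arguments establish the same key inequality $\varphi(xy) \le \varphi(x) + \varphi(y)$; the paper does this by a four-way case analysis comparing $x$, $y$, $xy$ with $1$, while you get it in one line from the subadditivity of the positive part, $(a+b)_+ \le a_+ + b_+$, applied to $\ln|X| + \ln|Y|$ --- a cleaner derivation of the same bound. For the sum, the paper constructs the concave hull $\psi$ of $\varphi$ (equal to $x/e$ on $[0,e]$ and $\ln x$ beyond), proves $\psi(x+y) \le \psi(x) + \psi(y)$ from concavity and $\psi(0)=0$, and uses the equivalence $\mathbb E\varphi(|X|) < \infty \Leftrightarrow \mathbb E\psi(|X|) < \infty$; you instead bypass any convexity considerations via the elementary bound $|X+Y| \le 2\max(|X|,|Y|)$ together with $\varphi(2u) \le \ln 2 + \varphi(u)$ and $\varphi(\max(a,b)) = \max(\varphi(a),\varphi(b))$, arriving at the explicit estimate $\varphi(|X+Y|) \le \ln 2 + \varphi(|X|) + \varphi(|Y|)$. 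Your route is more elementary and yields a concrete additive constant; the paper's route packages the slow growth of $\varphi$ into a genuinely subadditive concave majorant, which is a reusable device if one later needs subadditivity for sums of many terms without accumulating constants (though in fact the constants $n\ln 2$ from iterating your bound are equally harmless for the application in Theorem~\ref{thm:main}). Both proofs correctly avoid any independence assumption, as you note.
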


\begin{proof} First, let us show this for $X + Y$. This function $\varphi$ is nondecreasing but not concave:
$$
\varphi(x) = 
\begin{cases} 
\ln x, \, x \ge 1;\\ 
0,\, x \in [0, 1]. 
\end{cases}
$$
But it has a {\it concave hull:} the smallest concave $\psi : \mathbb R_+ \to \mathbb R$ such that $\psi(x) \ge \varphi(x)$ for all $x \ge 0$. We can find it by drawing a tangent line from $(0, 0)$ to $x \mapsto \psi(x) = \ln x$ at the point $(a, \ln a)$ for $a > 1$. We compare the slope of this line: $\ln a/a = \psi'(a) = 1/a$, equivalent to  $\ln a = 1$ and $a = e$. Therefore,
$$
\varphi(x) \le \psi(x) = 
\begin{cases}
x/e,\, x \le e;\\
\ln x,\, x \ge e.
\end{cases}
$$
This function $\psi$ satisfies 
\begin{equation}
\label{eq:psi-linearity}
\psi(x + y) \le \psi(x) + \psi(y),\quad x, y \ge 0.
\end{equation}
Indeed, from concavity of $\psi$, we have: $\psi(x + y) - \psi(x) \le \psi(y) - \psi(0) = \psi(y)$. Thus for any random variable $X$, 
\begin{equation}
\label{eq:phi-psi}
\mathbb E\varphi(|X|) < \infty \Leftrightarrow \mathbb E\psi(|X|) < \infty.
\end{equation}
Applying~\eqref{eq:phi-psi} to~\eqref{eq:finite-phi}, we get: 
\begin{equation}
\label{eq:finite-psi}
\mathbb E\psi(|X|) < \infty,\quad \mathbb E\psi(|Y|) < \infty.
\end{equation}
Applying~\eqref{eq:psi-linearity} to~\eqref{eq:finite-psi}, we get: $\mathbb E\psi(|X + Y|) < \infty$. Finally, apply~\eqref{eq:phi-psi} again and complete the proof that~\eqref{eq:finite-phi} implies $\mathbb E\varphi(|X + Y|) < \infty$. Next, we prove this for $XY$. Indeed, 
\begin{equation}
\label{eq:varphi-product}
\varphi(xy) \le \varphi(x) + \varphi(y),\quad x, y \ge 0.
\end{equation}
To prove~\eqref{eq:varphi-product}, we need to consider four cases, comparing $x, y, xy$ with $1$: 
\begin{itemize}
\item if $x, y \le 1$, then $xy \le 1$, so $\varphi(x) = \varphi(y) = \varphi(xy) = 0$;
\item if $x, y \ge 1$, then $xy \ge 1$, so $\varphi(x) = \ln x$, $\varphi(y) = \ln y$, and $\varphi(xy) = \ln(xy) = \ln x + \ln y$;
\item if $x \le 1$ and $y \ge 1$ so that $xy \le 1$, then $\varphi(x) = 0$ and $\varphi(y) = \ln y$, but 
$$
\varphi(xy) = 0 \le 0 + \ln y = \varphi(x) + \varphi(y); 
$$
\item if $x \le 1$ and $y \ge 1$ so that $xy \ge 1$, then $\varphi(x) = 0$ and $\varphi(y) = \ln y$, but 
$$
\varphi(xy) = \ln(xy) = \ln x + \ln y \le 0 + \ln y = \varphi(x) + \varphi(y).
$$
\end{itemize}
This completes the proof of~\eqref{eq:varphi-product}, and with it the lemma.
\end{proof}

\begin{lemma} Take IID $\delta(t)$ for $t = 0, 1, 2, \ldots$ with finite mean. Fix a $c \in (0, 1)$. Then there exists a unique stationary version of the autoregression $Y(t) = cY(t-1) + \delta(t)$, with $\mathbb E|Y(t)| < \infty$. 
\label{lemma:AR}
\end{lemma}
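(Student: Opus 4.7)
The plan is to exhibit an explicit stationary solution as an infinite moving-average series, then establish uniqueness by backward iteration. Concretely, I would define
\[
Y(t) \;:=\; \sum_{k=0}^{\infty} c^k\,\delta(t-k),
\]
and argue that, because $c \in (0,1)$ and $\mathbb E|\delta(0)| < \infty$, we have $\sum_{k=0}^{\infty} c^k\,\mathbb E|\delta(t-k)| = \mathbb E|\delta(0)|/(1-c) < \infty$. By the monotone convergence theorem applied to $\sum_k c^k|\delta(t-k)|$, the series defining $Y(t)$ converges absolutely almost surely and in $L^1$, and $\mathbb E|Y(t)| \le \mathbb E|\delta(0)|/(1-c)$.

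Next, I would check that $Y$ satisfies the recursion by splitting off the $k=0$ term: $Y(t) = \delta(t) + c\sum_{k=0}^{\infty} c^k\,\delta(t-1-k) = \delta(t) + c\,Y(t-1)$. Stationarity of the process $(Y(t))_{t\in\mathbb Z}$ then follows from the IID property of $(\delta(t))$: for any finite collection of time indices, the joint distribution of $(Y(t_1),\ldots,Y(t_m))$ is a measurable function of the shift-invariant IID sequence, hence invariant under time shifts. (To have the sequence well-defined for all $t\in\mathbb Z$ I would first extend $(\delta(t))$ to a two-sided IID sequence, which is standard.)

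For uniqueness, suppose $\tilde Y$ is another stationary solution with $\mathbb E|\tilde Y(t)| < \infty$. Iterating the recursion backwards gives
\[
\tilde Y(t) \;=\; c^{\,n}\,\tilde Y(t-n) + \sum_{k=0}^{n-1} c^k\,\delta(t-k).
\]
By stationarity, $\mathbb E|c^n\,\tilde Y(t-n)| = c^n\,\mathbb E|\tilde Y(0)| \to 0$, so the remainder vanishes in $L^1$ (and along a subsequence almost surely), while the partial sum converges in $L^1$ to $Y(t)$. Hence $\tilde Y(t) = Y(t)$ almost surely.

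I do not anticipate a serious obstacle: the argument is the classical geometric-series construction for an $L^1$ AR(1). The only delicate point is making sure the finite-first-moment hypothesis on $\delta$ is enough (rather than finite variance) for both the $L^1$ convergence of the defining series and the $L^1$ vanishing of the remainder $c^n\tilde Y(t-n)$; both rely only on $c\in(0,1)$ and $\mathbb E|\delta(0)|<\infty$, which is exactly what is assumed.
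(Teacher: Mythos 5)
Your proposal is correct and follows essentially the same route as the paper: both construct the stationary solution as the moving-average series $Y(t)=\sum_{k\ge 0}c^k\delta(t-k)$ over a two-sided extension of $(\delta(t))$ and bound $\mathbb E|Y(t)|$ by the geometric series $\|\delta\|_1/(1-c)$. The only difference is that you additionally verify the recursion, stationarity, and uniqueness (via backward iteration and $c^n\mathbb E|\tilde Y(0)|\to 0$), steps the paper delegates to the classic reference; these added details are sound.
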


\begin{proof} This is a well-known result, see the classic book \cite{Series}: Extending $\delta(t)$ for negative $t$, we solve
$$
Y(t) = \sum_{k = 0}^{\infty}c^k\delta(t-k)
$$
and observe that this series converges in $L^1$. Indeed, the $L^1$ norm 
$|\!|\delta|\!|_1 := \mathbb E|\delta(s)|$ is finite and the same for all $s$, so 
$$
\mathbb E|Y(t)| \le \sum\limits_{k=0}^{\infty}c^k\cdot\mathbb E|\delta(t-k)| = |\!|\delta|\!|_1\cdot\sum\limits_{k=0}^{\infty}c^k = \frac{|\!|\delta|\!|_1}{1-c}.
$$
This completes the proof of Lemma~\ref{lemma:AR}.
\end{proof}

\section{Financial Simulator} 

\subsection{User interface} We made an online financial app \texttt{asarantsev.pythonanywhere.com} implementing annual simulation of returns of these three asset classes: USA stocks, international stocks, and USA corporate bonds. We use four factors: the valuation measure $H$, the BAA rate $R$, the term spread $S$, and the volatility $V$. The back end is written in Python with Flask framework \texttt{flask\_app.py} and the front end is written in HTML in \texttt{main\_page.html} In an advanced version of this simulator, we allow to change initial values of these factors. The front end for this option is written in HTML in \texttt{complete\_page.html} The residuals are written in Excel file \texttt{innovations.xlsx}

We fix the last 10 years of earnings, 2015--2024. We consider only nominal data: We failed to model inflation-adjusted bond returns with meaningful regressions and IID residuals. 

We allow time horizon of 1--50 years. Almost none plans for more than that, and visualizing the simulation then becomes difficult. The stock/bond proportion can be adjusted at the start and at the end (for example, 80/20 now and 60/40 at the end of the simulation, which is 30 years from now). In between, it is interpolated linearly. Within the stock portfolio, the proportion of domestic and international stocks is set by the user and is constant. 

The number of simulations is 10000. We rank the simulations by final wealth, and pick the 5 paths which correspond to 10\%, 30\%, 50\% (median), 70\%, and 90\%. We picked them because we consider these to be representative of the entire distribution of possible paths. It is remarkable to see how much variance is in these returns. The 90\%-path final wealth can exceed the 10\%-path final wealth by factors of 100, if the time horizon is long enough and the stock market proportion in the portfolio is more than 50\%. 
The output is in \textsc{Figure}~\ref{fig:sim}.

\begin{figure}[t]
\includegraphics[width = 15cm]{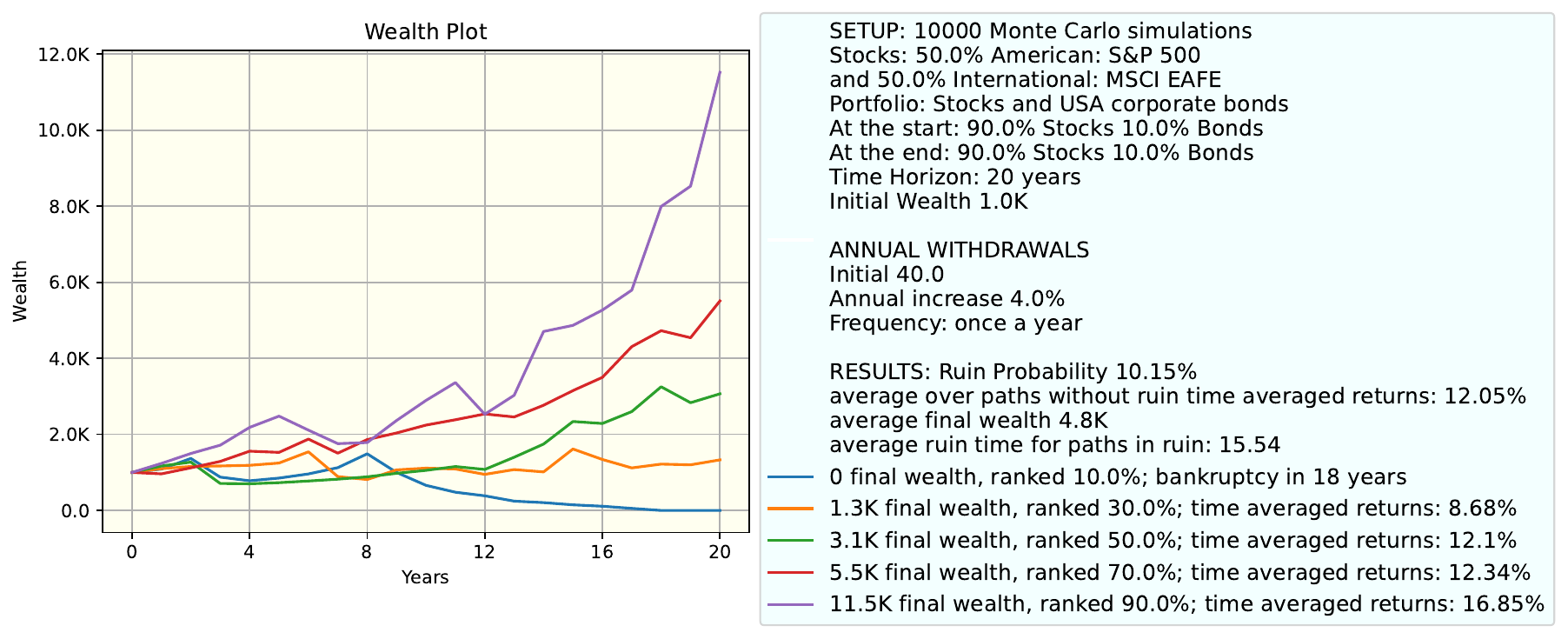}
\caption{Simulator Output. On the left, we have the graphs of 10\%, 30\%, 50\%, 70\%, 90\%-ranked paths by final wealth. On the right, we have the written problem input, and output summary statistics.}
\label{fig:sim}
\end{figure}

\subsection{White noise simulation}  We simulate the $\mathbb R^d$-valued white noise ($d = 7$) using multivariate kernel density estimation (KDE) with Gaussian kernel. Recall that we have $N = 2024 - 1927 = 97$ data points $z(1), \ldots, z(N)$. For the bandwidth, we use the Silverman's rule of thumb, slightly modified: The covariance matrix of this multivariate Gaussian kernel is diagonal, so de facto it is a product of one-dimensional Gaussian kernels. Now, take each component. Let $s$ be its empirical standard deviation, and let $r$ be its {\it interquartile range:} the difference between the 75\% and the 25\% quantiles. Then the bandwidth is 
\begin{equation}
\label{eq:sigma-KDE}
\sigma = \left(\frac{4}{d+2}\right)^{1/(d + 4)} \cdot N^{-1/(d + 4)}\cdot \min\left(s, \frac{r}{1.34}\right).
\end{equation}
We compute $\sigma = \sigma_i$ from~\eqref{eq:sigma-KDE} for each $i$th component. We adjusted the mulivariate Silverman's rule of thumb to make this bandwidth look more like the univariate one. Thus the density looks like this:
$$
F(\mathbf{x}) = \frac1N\sum\limits_{k=1}^N\prod\limits_{i=1}^d\varphi(x_i, z_i(k), \sigma_i),
$$
where $\varphi(y, m, s)$  is the probability density of $\mathcal N(m, s^2)$. 

However, some series of residuals do not have full 97 data points, see \textsc{Table}~\ref{table:residuals}. We provide the following method of filling the missing data in the Excel file \texttt{innovations.xlsx} implemented in \texttt{innovations.py} and with output written in another Excel file \texttt{filled.xlsx}

Assume $(X_0, X_1, \ldots, X_d)$ is a$(d+1)$-dimensional IID sample, such that $X_0$ does not have full data, but all other components  do have complete data. Our algorithm is:

\begin{itemize}
\item Fit the linear regression $X_0 = a_0 + a_1X_1 + \ldots + a_dX_d + Y$ for the data points when all components are present. Compute the residuals $Y$. 
\item Sample $Y'$ from existing residuals $Y$ randomly, uniformly, with replacement.
\item For the missing data $X_0$, simulate it as $X_0' = a_0 + a_1X_1 + \ldots + a_dX_d + Y'$, where $Y'$ is such sampled residual, and $X_1, \ldots, X_d$ are corresponding (existing) values. 
\end{itemize}

First, we do this for $Z_V$, with one missing data; then with $Z_I$, and then with $Z_B$. At each step, the series which were filled previously (that is, which played the role of $X_0$ above) are used as well for linear regression (becoming one of the components $X_1, \ldots, X_d$ above). 

We are aware that this is a non-standard statistical procedure. However, we could not find in the literature any other way of doing this. Classic machine learning methods such as nearest neighbors reduce variance. We understand the need for further research, and we welcome any suggestions for other existing methods. 

The multivariate kernel density estimation works as follows, using residuals written in the Excel file \texttt{filled.xlsx} 
\begin{itemize}
\item For each component $i = 1, 2, \ldots$, generate $N_i \sim \mathcal N(0, 1)$ IID.
\item Pick a random time $\tau = 1928, \ldots, 2024$ uniformly, independently of $N_1, N_2, \ldots$
\item Generate $Y = Z(\tau) + \begin{bmatrix}\sigma_1N_1 & \sigma_2N_2 & \ldots \end{bmatrix}$
\end{itemize}

\subsection{Withdrawals and contributions} Retirees or parents of college students regularly withdraw some funds from their wealth to support themselves or their children. Savers for future retirement or college regularly contribute to their (often tax-exempt) savings accounts. Endowments and souvereign wealth funds both contribute and withdraw regularly. It is easiest to model annual contributions or withdrawals, since the time step in our simulation is equal to 1 year.  If the wealth decreases to 0 or below, we fix this wealth at zero for the rest of the path. We compute the average ruin year for such paths. 

We compute the ruin probability: Which proportion of paths end up with zero wealth. Also, for each simulation, we compute which year it went bankrupt, if any. This allows us to display these ruin moments for the chosen five simulations with 10\%, 30\%, 50\%, 70\%, 90\% ranked final wealth; and the average ruin time for all paths that end in ruin. 

However, mostly contributions or withdrawals happen monthly. We allowed this without creating a separate model for simulating monthly (instead of annual) returns. Assume annual returns are simulated already. Then in each simulation, and in each time step of 1 year, we linearly interpolate the wealth process within this year. We use it to compute the resulting change in wealth for monthly withdrawals or contributions. In addition, we allow for the quarterly option (although this is less common). 

\begin{figure}[t]
\includegraphics[width = 16cm]{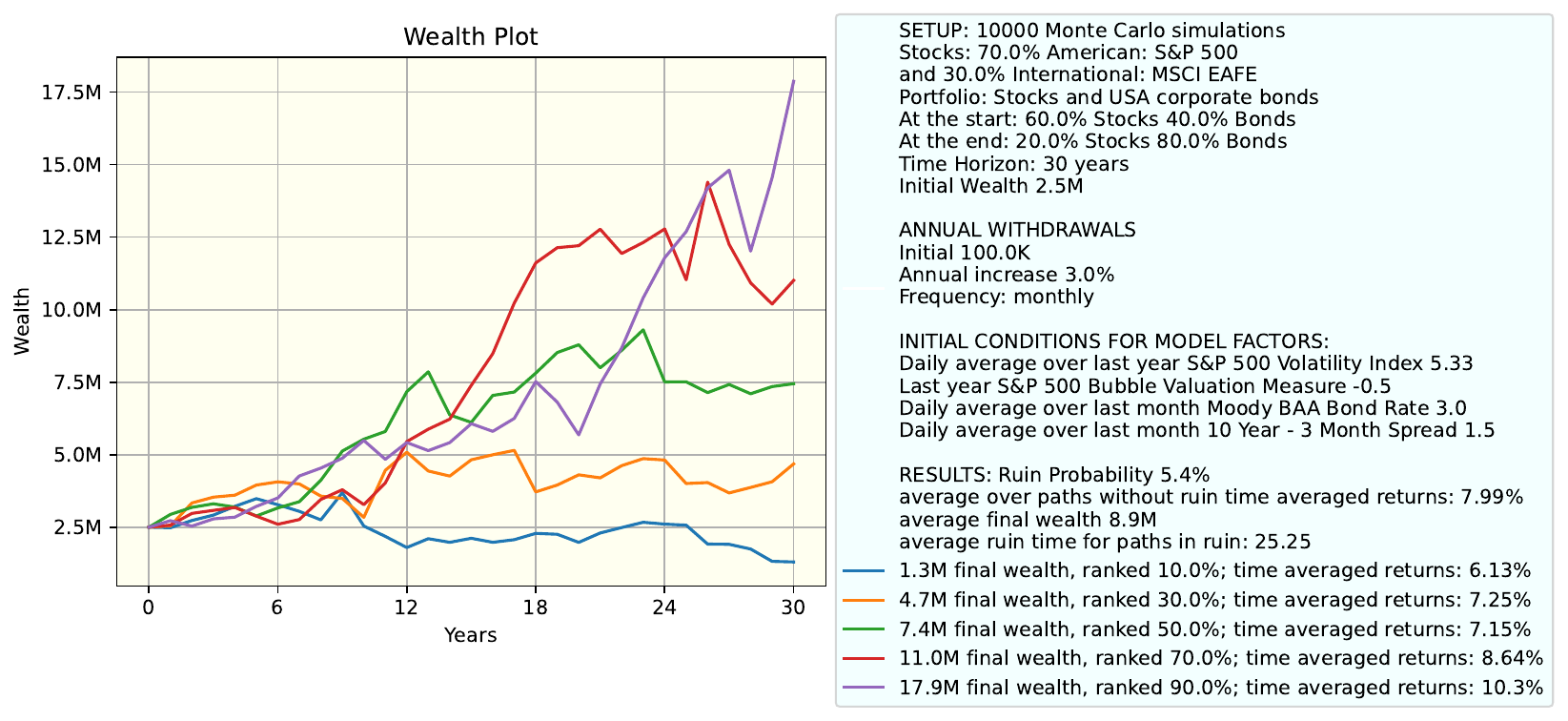}
\caption{Advanced simulator output. Initial wealth is 2.5 million, time horizon is 30 years, with initial factors: Volatility 10, BAA rate 3\%, Valuation measure -0.5, long-short spread 1.5\%. The stock/bond split is 60/40 at the start and 40/60 at the end, and the domestic/international stock split 70/30. Monthly withdrawals, annual initial amount 100K, with 3\% increase. On the left, we have the graphs of 10\%, 30\%, 50\%, 70\%, 90\%-ranked paths by final wealth. On the right, problem input, and output summary statistics.}
\label{fig:sim-advanced}
\end{figure}

\subsection{Frequent contributions/withdrawals} Assume we allow to contribute or withdraw funds from our portfolio more frequently than annually. The most common cases in practice are monthly or quarterly. We did not invent a complete model monthly or quarterly returns. 

Instead, we build on the annual model. Assume we already simulated one path, one annual step of our portfolio with arithmetic returns $r$. Assume that we do not change portfolio weights (stock/bond and domestic/international stocks) during this one year. To simulate frequent regular withdrawals, we make a very primitive assumption: Growth during this year has constant exponential rate. Take $T$ small time steps per one year: $T = 4$ for quarters, and $T = 12$ for months. Then wealth $1$ at the beginning of this year turns into wealth $(1 + r)^{t/T}$ in $t$ small time steps. Annual contributions/withdrawals of wealth $w$, done regularly at equal time intervals, is split in equal amounts $w/T$. 

Combine all this together: Contributions of wealth $1$ at small time step $t$ becomes $(1 + r)^{1 - t/T}$ at the end of this year. Thus the entire sequence of contributions becomes 
$$
\frac{w}{T}\sum\limits_{t=1}^T(1 + r)^{1 - t/T} = \frac{w}{T}\sum\limits_{s=0}^{T-1}(1 + r)^{s/T} = \frac{wr}{T}\left((1 + r)^{1/T} - 1\right)^{-1}.
$$

We see that the stock/bond allocation makes a difference for the 20-year horizon but not for the 40-year horizon, judging by the ruin probability $P$. In all cases, however, increased bond allocation decreases the variance of a portfolio. In our view, more than 30\% ruin probability is unacceptable for retirees. Since a 40-year time is a reasonable horizon for retirees, one needs to rethink or modify the 4\% withdrawal rate. We encourage the readers to try our simulator online and simulate various scenarios. 

Since we do not allow inflation adjustment (real quantities) and work only with nominal quantities, we needed to compensate for this: A reasonable annual income now would be laughably low in 50 years. We allowed annual increase or decrease (with constant rate) of contributions or withdrawals. We think that setting 4\% annual increase is a reasonable compensation for inflation. Indeed, historical annual inflation rate was around 3\%, even with the recent burst of inflation at the start of the 2020 decade.

\begin{table}
\centering
\begin{tabular}{|c|c|c|c|c|c|}
\hline
Stock/bond & Years & $P$ & $T$ & Average Wealth & 90\% Wealth \\
\hline
90/10 & 20 & 10.15\% & 15.54 & 4.8 &  11.5 \\
\hline 
90/10 & 40 & 30.66\% & 24.76 & 48.7 & 122.6 \\
\hline
60/40 & 20 & 4.35\% & 17.12 & 2.8 & 5.8 \\
\hline
60/40 & 40 & 28.2\% & 28.46 & 14.4 & 37.8 \\
\hline
40/60 & 20 & 1.51\% & 18.16 & 2.0 & 3.6\\
\hline
40/60 & 40 & 32.3\% & 31.1 & 5.2 & 14.0\\
\hline
\end{tabular}
\caption{Ruin probability $P$, ruin average time $T$, average final wealth, and final wealth corresponding to the 90\%-path.}
\label{table:critical}
\end{table}

\subsection{Retirement rules} In our previous article \cite{Valuation}, we considered validity of the classic 4\% withdrawal rate: A retired person withdraws 4\% of the initial capital the first year, and then keeps withdrawal annual amount constant after adjusting for inflation. We found this rule works well. However, there we could not simulate the earnings growth directly. As discussed in the Introduction, it is an essential part of such wealth simulation. Thus, in \cite{Valuation} we needed to make certain assumptions on earnings growth.  

In this article, we have a closed model. We can simulate this 4\% rule. As discussed above, we need to compensate for inflation by setting annual increases of withdrawals. We set 4\% annual increase in withdrawals. We assume the constant stock/bond portfolio split, and the 50/50 split within stock portfolio between domestic and international stocks. 

We see that the stock/bond allocation makes a difference for the 20-year horizon but not for the 40-year horizon, judging by the ruin probability $P$. In all cases, however, increased bond allocation decreases the variance of a portfolio. In our view, more than 30\% ruin probability is unacceptable for retirees. Since a 40-year time is a reasonable horizon for retirees, one needs to rethink or modify the 4\% withdrawal rate. We encourage the readers to try our simulator online and simulate various scenarios. 

\section{Conclusions and Further Research}

In this article, we fit several models (from simple to complicated) for wealth invested in portfolios of three asset classes: Domestic stocks, international stocks, and corporate bonds. Our goal for the model selection is:

\begin{itemize}
\item All regression residuals must be IID, and preferably Gaussian;
\item All regression coefficients must be significantly different from zero, or at least not have a large $p$-value for the Student $T$-test;
\item Regressions must have real-world financial sense.
\end{itemize}

We believe that all models considered in this article satisfy these requirements. We also provide $R^2$ values, although we do not consider this goodness-of-fit measure very important. Especially for stock returns, predictive power of various factors measured in $R^2$ is usually quite low. Stock returns are close to random walk. We cannot reasonably expect stock indices to have highly predictable returns, otherwise these would be arbitraged away. 

However, our research is quite limited. There exist many other more narrow asset classes. For the stocks, there are value/growth factors, size factors, sector portfolios (IT, medical, utilities, railroads), countries (Canada, Germany, China). We can also combine these factors: For example, German small value stocks. For background, see \cite[Chapter 6]{Textbook} or \cite[Chapter 7, Section 3]{Ang}. 

For the bonds, there are Treasury/corporate/municipal bonds, bonds of various maturities, corporate bonds with given ratings (say CCC), bonds from other countries, and their combinations. See \cite[Chapter 10, Section 2]{Ilmanen} or \cite[Chapter 9]{Ang}. 

We tried to extend our research for these asset classes, but failed: Not enough data, sometimes only available since the 1990s and 2000s; Regression residuals are not IID. 

Further research could try to find other longer data sets extending further to the past, or try nonlinear time series models to ensure residuals are IID. 

We also need to include the real (inflation-adjusted) case: As of now, we can do this for stocks but not bonds. Since any financial simulator must absolutely include bonds, we restricted ourselves to the nominal case. See also \cite[Chapter 11]{Ang}. 

What is more, further research might insert volatility into the bond market analysis, similarly to the stock market returns. In fact, for monthly bond returns, we found that including monthly average VIX improves residuals for bond regressions (makes them closer to IID and Gaussian). For the annual bond returns, this does not seem to work in our case. But we could try other bond classes, maybe for them VIX would work better. 

Last but not least, one must improve modeling of monthly/quarterly steps: Change this very primitive model described in Section 7. It seems hard to model monthly returns and monthly changes in factors, at least if you do not consider nonlinear time series. An alternative scenario seems to fix one-year step in one simulation path, and then build a model of monthly contributions/withdrawals around this step. We might need to simulate monthly returns for asset classes which we know, for limited available monthly data. 

\section{Statements and Declarations}

The authors declare no conflicts of interest. All data and code are on \texttt{GitHub/asarantsev} and the data are taken from publicly available sources. This research is not supported by any external funding. A.S. is grateful to the Department of Mathematics and Statistics for the supportive atmosphere for undergraduate research.

\end{document}